\documentclass[11pt,letterpaper]{article}
\usepackage[T1]{fontenc}
\usepackage{graphicx}

\usepackage{geometry}
\usepackage{amsfonts}
\usepackage{microtype}
\usepackage{xcolor}

\newcommand{\bid}{b}
\newcommand{\bids}{{\mathbf \bid}}

\usepackage{tikz}
\usepackage{enumitem}
\usepackage{bm}

\usepackage[ruled,vlined,linesnumbered]{algorithm2e}
\providecommand{\DontPrintSemicolon}{\dontprintsemicolon}

\SetCommentSty{mycommfont}

\usepackage{thmtools}
\usepackage{thm-restate}

\usepackage{url}
\usepackage[hidelinks]{hyperref}
\usepackage[utf8]{inputenc}
\usepackage[small]{caption}
\usepackage{booktabs}
\usepackage{algorithmic}
\usepackage[switch]{lineno}

\usepackage{amsthm}
\newtheorem{theorem}{Theorem}

\newtheorem{definition}{Definition}

\newtheorem{corollary}{Corollary}
\newtheorem{proposition}{Proposition}

\usepackage{titling}
\usepackage[noblocks]{authblk}

\begin{document}

\title{Envy Cycle Elimination with Strategic Agents:\\ Best Responses and Fairness Guarantees}

\author[1,2]{Georgios Amanatidis}
\affil[1]{Department of Informatics, Athens University of Economics and Business, Greece.}
\affil[2]{Archimedes / Athena RC, Greece.}
\author[3]{Georgios Birmpas}
\affil[3]{Department of Computer Science, University of Liverpool, UK.}
\author[4]{Rebecca Reiffenh{\"a}user}
\affil[4]{Institute for Logic, Language and Computation, University of Amsterdam, The Netherlands.  }

\predate{}
\postdate{}
\date{}

%
%
%

\maketitle         

\begin{abstract}
With strong evidence in the literature showing that fairness and truthfulness are incompatible, there is a recent line of work focusing on the fairness properties of equilibria of simple fair division mechanisms, especially Round-Robin. We consider the Envy Cycle Elimination (E-C-E) procedure of Lipton et al.~\cite{LiptonMMS04}, one of the most versatile tools in fair division. While this simple and intuitive algorithm achieves allocations that are envy-free up to one item (EF1) for any number of agents and general monotone valuation functions, surprisingly little is known about its behavior when agents act strategically. We demonstrate how the presence of incentives, although highly natural and relevant for the majority of applications, completely removes the intuitive clarity in the algorithm's execution, even for a few agents and very simple valuation functions. Additionally, while in the standard algorithmic setting there is great flexibility in how the details of E-C-E are implemented, here additional specifications are needed before the procedure is clearly defined, the choice of which has a potentially huge impact on the agents' behavior. Despite these obstacles, for various natural versions of E-C-E, we give the first results on the existence of Pure Nash Equilibria of the resulting mechanisms, and show there exist versions where fairness guarantees are approximately preserved for agents who play best responses.

\end{abstract}

\section{Introduction}\label{sec:intro}

The fair allocation of resources is a central problem in Economics that has also attracted significant attention in Mathematics, Political Science, and Computer Science. Over, roughly, the last two decades, there is a flourishing literature defining and exploring a rich landscape of different settings. One of the most studied variants of the problem considers a set of indivisible, desirable resources (\emph{goods}) that must be allocated to a set of agents, each of whom has an additive (or more general) valuation function over the resources.  We study this setting under the assumption that agents are \emph{strategic}. 

Although great progress has been made on its non-strategic counterparts, the strategic setting is much less explored. The main reason is that truthfulness and fairness are proven to be incompatible, even for restricted settings and across multiple fairness criteria \cite{Papai00,Klaus02,Klaus03,AmanatidisBCM17}. More closely investigating the connection between fair division and incentives, some recent works explore the fairness guarantees of non-truthful mechanisms in their equilibria \cite{AmanatidisBL0R23,ABFLLR24,BirmpasELR24} or try to bound the gain an agent may obtain from non-truthful deviations \cite{TY24}.

Unsurprisingly, these attempts focus on very structured mechanisms, where agents have a somewhat restricted strategy space. The only relatively well-understood case is the renowned Round-Robin algorithm viewed as a one-shot mechanism (i.e., agents provide their whole input right from the beginning). Here, pure Nash equilibria always exist and best-response strategies induce allocations that are fair with respect to the real valuation functions \cite{AmanatidisBL0R23,ABFLLR24}, with relaxed fairness results extending to agents with submodular valuation functions. 
However, focusing on Round-Robin severely limits the reach of these results in two ways. First, its proven fairness guarantees fail to extend beyond submodular valuation functions even in the non-strategic setting. Second, one isolated algorithm preserving fairness when strategic agents choose \emph{good} reports does not suffice to reach a deeper understanding of the interplay between EF1 fairness and agents' desire to maximize their utility. Due to these limitations, one of the most natural questions to ask next is whether (some version of) the celebrated Envy-Cycle-Elimination algorithm (E-C-E) of Lipton et al. \cite{LiptonMMS04} exhibits equally nice properties. 

Technically, E-C-E is not a single algorithm, but rather a whole family of them. Depending on how the next agent  and the next item is chosen, or when envy cycles are resolved and in which order, one gets different variants that all output EF1 allocations for monotone valuation functions  \cite{LiptonMMS04}, although some might have additional properties \cite{BarmanK20,CaragiannisGRSV23}. 
Unfortunately, even with additive valuations, given as input before the mechanism starts, the games induced by E-C-E variants are vastly more complicated than their counterparts for Round-Robin. The main reason is that an agent can affect the order in which the agents may receive goods by (un)blocking some of them for a number of rounds in the underlying envy graph. The effects of this complex behavior are also reflected in the infinite \emph{incentive ratio} of such mechanisms. As recently shown by Tao and Yang \cite{TY24} for two very natural E-C-E variants (Standard-E-C-E and Priority-E-C-E herein), there are instances for which an agent may arbitrarily improve her utility by deviating from her true bid. 

Despite this intricate behavior, we initiate the study of E-C-E and its fairness properties when an agent submits
a valuation function that is a best response to the other agents' bids.
Of course, the more structured version of this question is to assume that all agents 
play a best response simultaneously, i.e., consider pure Nash equilibria. However,
the existence of such equilibria is far from trivial and, as we show, not always guaranteed.
Nevertheless, we prove that, for a small number of agents, 
even isolated best responses induce fair allocations from the (true) perspective of the corresponding agents.

\smallskip
\noindent\textbf{Our Contributions.\ } To the best of our knowledge, our work is the first that studies the stable states of the celebrated E-C-E algorithm, under the setting of strategic agents. In particular, our paper focuses on the questions of whether pure Nash equilibria always exist, and whether the fairness properties of the algorithm are preserved for agents who play a best response to the other agents' strategies. We consider several natural variants of E-C-E, and  provide both positive and negative results, most of which demonstrate the limits of what can be guaranteed by the algorithm---viewed as a mechanism---in this setting. More specifically: 
\begin{itemize}[itemsep=1pt, topsep=2pt, leftmargin=9pt]
\item In Section \ref{sec:PNE}, we explore the frontier of (non-)existence of pure Nash equilibria. We show that for several versions of E-C-E, pure Nash equilibria always exist for very simple scenarios with two agents who have unary additive valuation functions. At the same time, if one considers more complex valuation functions, then the existence no longer holds, even for extremely simple cases with two agents and four items. 
\item In light of this impossibility, we study whether the fairness properties of the algorithm are preserved for agents who play (pure) best response strategies. In Section \ref{sec:BR2}, we focus on two agents with additive valuation functions, and show that  whoever plays a best response is guaranteed to see the resulting allocation as EF1 (according to her true valuation function!) for most versions of E-C-E considered.
\item Going beyond the case of two agents, in Section \ref{sec:BR3},  we show that for three agents with additive valuation functions, equilibria might fail to induce EF1 allocations even in the simplest E-C-E variant we consider. Nevertheless, we prove there is a version of E-C-E for which anyone playing a best response strategy, is guaranteed to see the resulting allocation as being $\frac{1}{2}$-EF1. \textit{This can be seen as the main technical result of our paper.} 
\item Finally, in Section \ref{sec:BR3S}, we  generalize this last result to instances involving monotone \emph{subadditive} valuation functions. This goes qualitatively beyond previous work \cite{AmanatidisBL0R23}, introducing \textit{the first result} on fairness guarantees for best-responding agents with valuation functions  beyond monotone submodular. 
Therefore, our second central contribution is widening the horizon on fairness guarantees of non-truthful mechanisms for rationally behaving agents into the broader class of monotone subadditive functions. 
\end{itemize}


\smallskip
\noindent\textbf{Related Work.\ }The literature on fair division, even restricted to indivisible item settings, has recently exploded. We restrict ourselves to only mention works that involve incentives; the interested reader is referred to a number of excellent surveys \cite{Procaccia_cake_16,LindnerR16,moulin2018fair,LangR16,BouveretCM16,AmanatidisABFLMVW23} for a comprehensive overview of the area.

Existing contributions investigating the design of truthful mechanisms to allocate a set of indivisible items to a set of strategic agents \cite{Papai00,Klaus02,Klaus03}, suggest that such mechanisms are very restricted---often just dictatorships---and cannot have any non-trivial fairness guarantees. Even in the additive case, and for just two agents, no truthful mechanism can have a constant-factor approximation for any commonly used fairness notion \cite{MarkakisP11,AmanatidisBM16,AmanatidisBCM17}. The only known positive results on truthful mechanisms with meaningful fairness guarantees impose massive restrictions on the number of agents and items \cite{xCKKK09}, or the structure of valuation functions \cite{HPPS,BabaioffEF21}.

Due to these impossibilities, other approaches have been considered. For instance, there are works that study Bayesian incentive compatible design \cite{Gkatzelis0TV24} or fairness in conjunction with relaxed versions of truthfulness \cite{PV22}. The line of work closest to ours studies pure Nash equilibria of one of the most fundamental and simple algorithms used in fair division of indivisible items, namely Round-Robin \cite{GW17,AmanatidisBL0R23,ABFLLR24}. In particular, Aziz et al. \cite{GW17} show the existence of pure Nash equilibria, and Amanatidis et al. \cite{AmanatidisBL0R23,ABFLLR24} prove that every pure Nash equilibrium of Round-Robin (approximately) preserves the fairness properties of the mechanism, not just for additive valuation functions, but also for superclasses of them. A similar approach was then followed by Birmpas et al. \cite{BirmpasELR24}, in a generalized setting where the agents' valuation functions can be interdependent. To the best of our knowledge, the only other work that considers the E-C-E under the strategic setting is that of Tao and Yang \cite{TY24}, where the performance of E-C-E is explored with respect to the (worst-case) ratio between the value an agent may get via manipulation and her value under truth-telling.

\section{Preliminaries}\label{sec:prelim}
Let $N = [n] := \{1, 2, \ldots, n\}$ be a set of $n$ agents and $M = \{g_1,\ldots, g_m\}$ be a set of $m$ items.
We consider the problem of fully and \textit{fairly} allocating the set $M$ to the agents of $N$ under the assumption that the agents are strategic. 
Each agent $i\in N$ has a valuation function $v_i:2^M\to \mathbb{R}_{\ge0}$ over the subsets of items. That is, the items are \textit{goods} as they always have nonnegative (marginal) value. 
The valuation functions in this work are all \emph{monotone subadditive}, i.e.,  $v_i(S) \le v_i(S\cup T) \le v_i(S)+ v_i(T)$ for every $S, T \subseteq M$, but
for most of our results they are assumed to be (possibly a special case of) \textit{additive}, i.e., $v_i(S\cup T) = v_i(S) + v_i(T)$ for every $S, T \subseteq M$ with $S\cap T= \emptyset$. For the sake of readability, we write $v_i(S-g)$ instead of $v(S\setminus\{g\})$ and $v_i(g)$ instead of $v(\{g\})$. When $v_i$ is additive and $v_i(g)\in \{0,1\}$ (resp.~$v_i(g) = 1$) for all $g\in M$, we say that $v_i$ is \textit{binary} (resp.~\textit{unary}). 
Note that an additive function $v_i$ can be fully specified by the vector $(v_i(g_1), v_i(g_2), \ldots, v_i(g_m))$. 

An \textit{allocation} $A =  (A_1,\ldots,A_n)$ of the items is a partition of $M$ into $n$ sets;  $A_i$ is called the \emph{bundle} of agent $i$. We next define 
\emph{envy-freeness up to one item} (EF1) \cite{LiptonMMS04,Budish11}, a relaxation of \emph{envy-freeness} \cite{GS58,Foley67,Varian74}, which is the main fairness notion in this work.

\begin{definition}\label{def:EF-EFX}
	Let $\alpha \in [0,1]$. An allocation $(A_1,\ldots,A_n)$ is $\alpha$-\textit{envy-free up to one item} ($\alpha$-EF1), if for any pair of agents $i, j\in N$, with $A_j\neq\emptyset$, there exists an item $g\in A_j$, such that $v_i(A_i) \geq \alpha \cdot v_i(A_j - g)$.
\end{definition}

Occasionally, we deal with allocations that are not $\alpha$-EF1 but are such that for every $j\in N$ with $A_j\neq\emptyset$, there is some good $g_{(j)}\in A_j$, such that $v_i(A_i) \geq \alpha \cdot v_i(A_j - g_{(j)})$. We then say that $(A_1,\ldots,A_n)$ is $\alpha$-EF1 \emph{from agent $i$'s perspective}.

\paragraph{Mechanisms}\label{subsec:mechanisms}
In our setting, where there are no payments, \emph{mechanisms} essentially refer to algorithms that receive their input from the agents.
So, the main distinction from algorithms is  that the reported valuations may differ from the true ones. 
Throughout this work we assume that each agent $i$ reports a \emph{bid vector} $\bm{b}_i$ specifying her valuation function as well as a \textit{preference list} dictating the order in which she would prefer to get the items. The latter is relevant for the variant of E-C-E we use in Sections \ref{sec:BR3} and \ref{sec:BR3S}, and is not required to be consistent with the submitted values. For the additive setting, this means that  $\bm{b}_i = (b_{i1}, b_{i2}, \ldots, b_{im}; \textrm{pref}_i)$, where $b_{ij}\ge 0$ is the value agent $i$ claims to have for item $g_j$ and $\textrm{pref}_i$ is a permutation of $M$. For the subadditive case, as described in Section \ref{sec:BR3S}, the first part of the bid vector instead consists of a value oracle. 

A mechanism $\mathcal{M}$ takes as input a \emph{bid profile} $\mathbf{b} = (\bm{b}_1, \bm{b}_2, \ldots, \bm{b}_n)$ of bid vectors and outputs an allocation $\mathcal{M}(\mathbf{b}) = (\mathcal{M}^{(1)}(\mathbf{b}),  \ldots, \allowbreak \mathcal{M}^{(n)}(\mathbf{b}))$. Given a profile $\mathbf{b} = (\bm{b}_1, \allowbreak \ldots, \bm{b}_n)$, we use the shortcut
$\mathbf{b}_{-i}$ for $(\bm{b}_1, \ldots, \bm{b}_{i-1},  \allowbreak \bm{b}_{i+1}, \allowbreak \ldots, \bm{b}_n)$ and 
the shortcut $(\bm{b}'_i, \mathbf{b}_{-i})$ for the profile $(\bm{b}_1,  \allowbreak \ldots, \bm{b}_{i-1}, \allowbreak \bm{b}'_{i}, \allowbreak \bm{b}_{i+1}, \ldots, \bm{b}_n)$, where $\bm{b}'_{i}$ is a bid vector. 

\begin{definition}\label{def:PNE}
	Let $\mathcal{M}$ be an allocation mechanism and consider a profile
	$\mathbf{b} = (\bm{b}_1, \ldots, \bm{b}_n)$. We say that $\bm{b}_{i}$ is a \emph{best response} to $\mathbf{b}_{-i}$ if for every bid vector $\bm{b}'_i$, we have 
	$v_i(\mathcal{M}^{(i)}(\bm{b}'_i, \mathbf{b}_{-i}))\le v_i(\mathcal{M}^{(i)}(\mathbf{b})) $.
	The profile $\mathbf{b}$ is a \emph{pure Nash equilibrium} (PNE) if, for each $i\in N$, $\bm{b}_{i}$ is a best response to $\mathbf{b}_{-i}$.
\end{definition}

\subsection{Envy Cycle Elimination}\label{subsec:e-c-e}
As we mentioned in the Introduction, the \emph{envy-cycle-elimination algorithm} (E-C-E) of Lipton et al. \cite{LiptonMMS04} is fairly loosely defined and captures a whole family of algorithms. Before formally describing it, we introduce the notion of the \emph{envy graph}.
Suppose we have a partial allocation $P = (P_1,\ldots,P_n)$, i.e., an allocation of a strict subset of $M$. We define the directed envy graph $G_{P} = (N, E_{P})$, where $(i,j) \in E_{P}$ if and only if $v_i(P_i)<v_i(P_j)$, i.e., agent $i$ currently envies agent $j$. E-C-E iteratively builds the allocation by choosing in each step an agent whom no one envies (i.e., a \textit{source} in $G_{P}$), and adding to her bundle one of the available items. To ensure such an agent always exists, E-C-E eliminates any emerging cycles of envy, by reallocating the current bundles along the cycle edges.
In particular, if $j_1 \to j_2 \to \ldots \to j_r \to j_1$ is a cycle in $G_{{P}}$, agent $j_k$ receives the bundle previously owned by $j_{k+1}$ (where $j_{r+1}$ is, by convention, agent $j_1$).

\begin{algorithm}[ht]
	\DontPrintSemicolon 
	{\small 
        $(A_1, \ldots, A_n) \leftarrow (\emptyset, \ldots, \emptyset)$\; 
		Construct the envy graph $G_{A}$\;
		\While{$M\neq \emptyset$}{
            Let $i\in N$ be a node of in-degree $0$ \label{line:source} \tcc*{{\scriptsize next source}}
            Let $g\in M$ be an available item \label{line:good} \tcc*{{\scriptsize next item}}
            $A_{i} \leftarrow A_{i}\cup \{g\}$\;
            $M \leftarrow M\setminus \{g\}$\;
			Update $G_{A}$ according to $\mathbf{b}$ \;
            Ensure $G_{A}$ has a source \label{line:resolve} \tcc*{{\scriptsize possible elimination}}
        }
 
		\Return $A$ \; 
	}
	\caption{Envy-Cycle-Elimination$(N, M, \mathbf{b})$ 
        \label{alg:ece}}
\end{algorithm}

It is clear that lines \ref{line:source}, \ref{line:good} and \ref{line:resolve} give a lot of freedom on how the algorithm is actually executed. For instance, the source in line \ref{line:source} might be chosen lexicographically or according to how envious sources are, the item in line \ref{line:good} might be chosen according to a fixed order or according to the preference of the current source, and envy cycles in line \ref{line:resolve} might be resolved immediately or only when no source is left. All these (and several other) possibilities are valid ways to implement E-C-E and guarantee that it produces EF1 allocations in the algorithmic setting. Yet, it should be intuitively clear that they greatly affect strategic behaviors in our game theoretic setting. Therefore, we need to fix these choices before we are able to say anything meaningful here. 

Starting with the envy cycles, we assume that as long as there are cycles in $G_A$, they are all resolved in line \ref{line:resolve} according to some fixed order.\footnote{For our results, where $n = 2$ or $3$, this is trivially done. In general, one can still define a lexicographic order and follow that. } 
Then, we consider the following variants, depending on how the next source and good are determined:
\begin{itemize}[itemsep=1pt, topsep=2pt, leftmargin=9pt]
    \item \emph{Standard-E-C-E:}\, Among multiple sources, the next agent is chosen according to a predefined ordering of the agents. The next item is also determined  by a fixed predefined ordering.
    \item \emph{Priority-E-C-E:}\, Among multiple sources, envious agents  (i.e., agents with outgoing edges) have priority; if there are multiple such agents, the next one is chosen according to a predefined ordering of the agents. Like in Standard-E-C-E, the next item is determined by a fixed predefined ordering.

    \item \emph{Best-Item-E-C-E:}\, The next agent is chosen exactly like in Priority-E-C-E. The next item is agent $i$'s best available item according to the declared values in her bid vector $\bm{b}_i$ (where ties are resolved according to a predefined ordering).

    \item \emph{Preferred-Item-E-C-E:}\, Similar to the Best-Item-E-C-E
    above, but now  the next item is agent $i$'s best available item according to $i$'s preference list, $\textrm{pref}_i$. Recall that this might be distinct from the available good with the highest (declared) value for $i$.
\end{itemize}

At this point it is worth noting that the introduction of $\textrm{pref}_i$ results in a richer strategy space. This additional flexibility the agents have in determining how the items are allocated is critical for our positive results for three agents.

\section{E-C-E and Pure Nash Equilibria}\label{sec:PNE}

In this section we explore whether the several versions of E-C-E that we consider, have pure Nash equilibria. Although we begin with some positive results, these results regard very simple and restricted scenarios with respect to the true valuation functions of the agents. We then, demonstrate that if one tries to go beyond these scenarios, the complex nature of E-C-E (when it is considered under strategic agents) prevents the existence of stable states. We see the set of results in this section as a justification for studying the problem for a small number of agents and just for best responses in Section \ref{sec:BR}.

Before we proceed, we want to point out that although the true value profiles that we consider for the agents are restricted, their declarations \textit{are not bounded by this}, i.e., they still can declare any values for the items.

\subsection{Pure Nash Equilibria in Simple Scenarios}\label{sec:PNEun}

The most simple scenario that one can consider regarding the true valuation profiles of the agents is the one where all agents have additive valuation functions, and they value all the items equally, i.e., the valuation functions are \emph{unary} and agents simply strive to maximize the \emph{number} of items assigned to them. 
The following positive results, when coupled with our findings in the next section, demonstrate the limits of what one can expect with respect to the existence of pure Nash equilibria.

\begin{proposition}\label{prop:pneun}
    For any of the considered 
     E-C-E variants, there exists a pure Nash equilibrium that induces an EF1 allocation according to the true values of the agents, when they have identical binary 
    valuation functions, and the number of positively valued items is less than $n$ or a multiple of $n$. 
\end{proposition}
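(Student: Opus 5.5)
The plan is to take the truthful profile (every agent reports her true binary values, and, for Preferred-Item-E-C-E, a preference list that puts the positively valued items first) as the candidate equilibrium. The main tool is a per-agent form of the classical EF1 argument of Lipton et al.~\cite{LiptonMMS04}. Fix any agent $j$ and any bid profile. Whenever a bundle receives an item, its current owner is a source, so no agent envies that owner according to her \emph{own reported} valuation. Cycle elimination only raises each agent's reported value for her own bundle. Hence, in every variant considered, for every pair $j,i$ with $A_i\neq\emptyset$ there is some $g\in A_i$ with $b_j(A_j)\ge b_j(A_i-g)$, and this holds regardless of what the other agents report. If $j$ is truthful, this is EF1 from $j$'s true perspective.

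Let $G$ be the set of positively valued items, $k=|G|$, and write $v$ for the common valuation. Then $v(S)=|S\cap G|$.

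\textbf{Case $k=cn$.} Under truthful reporting every agent is EF1 towards every other with respect to $v$, so the counts $|A_i\cap G|$ differ pairwise by at most one. Since they sum to $cn$, they all equal $c$, and the allocation is envy-free, hence EF1. For stability, suppose agent $i$ deviates while all others stay truthful, and $i$ ends up with at least $c+1$ items of $G$. The other $n-1$ agents then share at most $(n-1)c-1$ such items, so some $j$ has at most $c-1$. Yet $j$ is truthful, so by the per-agent EF1 property $v(A_j)\ge v(A_i-g)\ge c$ for some $g\in A_i$, a contradiction. Thus no deviation is profitable.

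\textbf{Case $k<n$.} The same counting shows that under truthful reporting every agent gets at most one item of $G$; the allocation is then EF1. It also shows that no deviator can reach two items of $G$: the other agents would hold at most $k-2$ such items among $n-1\ge k$ agents, so some truthful $j$ would hold none while $v(A_i-g)\ge1$ for every $g$, again contradicting EF1 from $j$'s perspective. So every agent who already holds an item of $G$ is best-responding.

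The main obstacle is the $n-k$ agents who end up with no item of $G$. A priori they might, by misreporting, reshape the envy graph so that they become the source at the moment a good item is handed out. My first attempt is to analyse truthful runs directly. Until all of $G$ is gone, an agent holding a good item is envied by those holding none. So the good items are handed out in the first phase to distinct agents, in the order determined by the fixed agent ordering (and, for the Best-Item/Preferred-Item variants, by the agents' picks among $G$). A zero-holder can only change who is the source during that phase by creating envy edges of her own. I would try to show that her own reports cannot remove edges pointing into other agents' sources in time to jump ahead. If this fails for some variant, the fallback is to modify the profile: the agents destined to receive nothing report all-zero values (and arbitrary preference lists), and I would re-check, variant by variant, that their deviations cannot capture an item of $G$ while the others' best-response arguments above still go through.
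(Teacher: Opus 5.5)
\textbf{The case $k=cn$.} Your argument is correct and is the paper's argument: under truth-telling every agent gets exactly $c$ valuable items, and a deviator with $c+1$ forces some truthful agent down to $c-1$, contradicting EF1 with respect to that agent's own, truthful, report. Your per-agent EF1 invariant makes explicit what the paper uses implicitly.

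\textbf{The case $k<n$.} Here there is a genuine gap. You stop at a plan, and the plan's main route is false for Standard- and Priority-E-C-E (the two coincide for $n=2$). Take $n=2$ and two items in the fixed order $g_1,g_2$, with $v(g_1)=0$ and $v(g_2)=1$ for both agents.
\begin{itemize}
\item Under truthful bids, agent~1 receives $g_1$ and no edge appears. So agent~1 also receives $g_2$, and agent~2 ends with value $0$.
\item If agent~2 instead bids $b_2(g_1)=1$, $b_2(g_2)=2$, she envies agent~1 after the first step and becomes the unique source. She receives $g_2$ and from then on envies nobody, so no cycle can take $g_2$ from her.
\end{itemize}
So truth-telling is not a PNE. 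The step you hoped to prove, that a zero-holder cannot jump ahead, fails because the deviator need not remove any edge. She simply creates an edge into the agent who is currently collecting worthless items, whom nobody envies under truthful bids. With a fixed item order, the valuable items are not handed out in an initial phase, and this is exactly the window she exploits.

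\textbf{The fallback.} It cannot repair this. What agent $i$ can obtain by deviating depends only on $\mathbf{b}_{-i}$, so changing the zero-holders' own bids does nothing to block their own deviations. In the example, the bid above secures $g_2$ for agent~2 against every bid of agent~1. Hence in every PNE agent~2 holds $g_2$; for instance, her bid $(1,2)$ together with any bid of agent~1 is an EF1 equilibrium. The assignment of valuable items from the truthful run therefore has to be abandoned, not patched.

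\textbf{What survives and what is missing.} Your truthful analysis does go through for Best-Item- and Preferred-Item-E-C-E. There, agents $1,\dots,k$, as the lowest-indexed envious sources, pick the $k$ valuable items in the first $k$ steps. Each bundle then contains at most one valuable item, so these agents never envy anyone afterwards. What is missing is an explicit non-truthful equilibrium for the two fixed-item-order variants. The paper disposes of this case only by saying it is ``very similar'' to the multiple-of-$n$ case. The example shows it is not, so the paper does not supply this piece either.
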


\begin{proof}

We are considering instances with $n$ agents, all of which have identical binary valuation functions. We will show that in this case there is always a pure Nash equilibrium that induces an EF1 allocation according to the true values of the agents. Notice that as we already mentioned in the beginning of this section, the agents are free to declare whichever value they want for each of the items (their declarations are not affected by the fact that their true values are binary and identical). 

We will show that in case the number of items that the agents truly value as 1 is a multiple of $n$, say, $k\cdot n$, then truth-telling, i.e., having all the agents submitting their true values over the items, is a pure Nash equilibrium for all the considered versions of E-C-E. (The case where the number of valuable items is less than $n$ is very similar.) Initially notice that when the agents declare their true values, E-C-E guarantees exactly $k$ items of value 1 to all them, as otherwise the EF1 condition would be violated. To see this, notice that if there was at least one agent $i$ with more than $k$ items of value 1, this would imply that there would be at least one other agent $i'$ with less than $k$ items of value 1 (as the agents not only have binary valuation functions, but these functions are also identical). As all the agents declared their true values, and agent $i'$ has at least two less items of value 1 than agent $i$, she would not be EF1 towards agent $i$, and this is not possible due to the fact that every version of E-C-E guarantees that the allocation is EF1 to everyone.

Under the same reasoning, if agent $i$ was able to deviate from the truth-telling profile and gain more than $k$ items of value 1, this would once again imply that there would be at least one other agent $i'\neq i$ that would end up with less than $k$ items of value 1 (recall that the values are identical). Since agent $i'$ declared her true values, and there is some agent $i$ that has at least two more items of value 1 than her, she would not be EF1 towards agent $i$, a contradiction. 

The property that the pure Nash equilibirum is EF1-fair according to the true values of the agents, is derived trivially from the fact that the agents declare their true values, and the fairness properties of the algorithm.
\end{proof}



We proceed with the next theorem that regards the Standard-E-C-E, and the existence of pure Nash equilibria under unary valuation functions, when the number of items is not necessarily a multiple of $n$. Despite the theorem regarding only the case of two agents, the construction-properties of the equilibrium involve several arguments, demonstrating the complex behavior of agents even in this simple case.

\begin{restatable}{theorem}{PNEtwoagents}\label{thm:PNE2agents}
    For two agents with unary valuation functions, Standard-E-C-E always has a pure Nash equilibrium that induces an EF1 allocation according to the true valuation functions.  
\end{restatable}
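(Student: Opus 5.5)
Since agents have unary valuations, utilities are simply bundle sizes, and EF1 with respect to the true valuations means exactly $\big||A_1|-|A_2|\big|\le 1$. I would split on the parity of $m$ and assume Standard-E-C-E gives priority to agent $1$ among sources and allocates items in the order $g_1,\dots,g_m$.

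\textbf{Even $m$.} Here I would reuse the argument of Proposition~\ref{prop:pneun} directly. Under truth-telling each agent receives $m/2$ items. If agent $i$ deviates and obtains more than $m/2$ items, the other agent, who is still truthful, ends up at least two items behind. That contradicts the EF1 guarantee of E-C-E with respect to her (true) bid, so truth-telling is a PNE and is EF1.

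\textbf{Odd $m$.} The structure is asymmetric: one agent gets $(m+1)/2$ items and the other $(m-1)/2$. Any agent who bids truthfully is guaranteed at least $(m-1)/2$ items by EF1 with respect to her bid. Hence in any profile where one agent is truthful, that agent's best-response question reduces to whether she can reach $(m+1)/2$.

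I would first analyse a \emph{truthful} agent $1$ against an arbitrary bid $\bm b_2$. Agent $1$ (truthful) envies agent $2$ exactly when agent $2$'s current bundle is strictly larger. So a swap along a $2$-cycle only occurs when agent $2$ is ahead and wants agent $1$'s smaller bundle, i.e.\ it never helps agent $2$ in terms of cardinality. I would prove the invariant $|A_2|\le |A_1|+1$ throughout, and show that the step pattern is essentially: when the counts are equal, agent $2$ moves first iff she (by her bid) envies agent $1$. This lets me characterise exactly the bids $\bm b_2$ with which agent $2$ ends with $(m+1)/2$. Roughly, she needs equal counts just before the last item, together with declared envy toward agent $1$ at that moment.

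The case split is then as follows.
\begin{itemize}
\item If no such bid exists, truth-telling is a PNE. Agent $1$ already gets the maximum $(m+1)/2$, and agent $2$ cannot exceed $(m-1)/2$.
\item If such a bid exists, I would fix a canonical one $\bm b_2^*$, e.g.\ declaring value only on $g_m$ (or on a suitably chosen suffix) so that she "waits" and then grabs the last item. I would then check that agent $1$ cannot recover $(m+1)/2$ against $\bm b_2^*$ by any bid. For this I would redo the same state analysis with agent $2$'s bid fixed and agent $1$'s bid free, tracking only who is a source at each step and when cycles occur. I would aim to show that the item-by-item trajectory is forced enough that agent $2$ still obtains $g_m$ on top of an equal split.
\end{itemize}
If this fails for the first choice of $\bm b_2^*$, I would look for a fixed point by alternating best responses. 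This process can only move between the two outcomes $(m\pm1)/2$, so I would seek a profile where the agent holding $(m+1)/2$ can "lock in" her advantage.

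The resulting profile is EF1 with respect to the true valuations automatically, since bundle sizes differ by exactly one. I expect the main obstacle to be the second bullet: ruling out \emph{every} deviation of agent $1$ against $\bm b_2^*$. Agent $1$'s bid can create or destroy envy edges and trigger swaps at arbitrary moments, so I would need a careful invariant on the pair (bundle-size difference, who is a source) that holds under all of agent $1$'s bids.
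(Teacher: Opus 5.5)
Your even-$m$ case is fine; it is exactly the paper's appeal to Proposition~\ref{prop:pneun}. The odd-$m$ case, however, has a genuine gap: for odd $m\ge 3$ both branches of your case split are dead ends.

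\emph{The first branch never applies.} Already for $m=3$, suppose agent 2 bids $(2,1,1)$ against a truthful agent 1. Agent 1 takes $g_1$. Agent 2 envies her and takes $g_2$, and she still envies. Agent 1 does not envy at equal size, so agent 2 is the unique source and also takes $g_3$. At that point she no longer envies, so no swap occurs, and she ends with $\{g_2,g_3\}$.

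\emph{The second branch fails for every choice of $\bm{b}_2^*$.} Agent 1 has a bid that secures $(m+1)/2$ items against \emph{every} bid of agent 2. For $m=3$ this bid is $(1,1+\epsilon_2,1+\epsilon_3)$ with $0<\epsilon_2<\epsilon_3$ small. Agent 1 always gets $g_1$. If agent 2 does not envy her, agent 1 also takes $g_2$. Otherwise agent 2 takes $g_2$, and now agent 1 envies her. Then either the $2$-cycle is swapped and agent 1 takes $g_3$ by priority, or agent 1 is the unique source and takes $g_3$. In each case no final swap is possible.

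Hence no profile in which agent 2 holds the extra item is a PNE. In fact, for odd $m\ge3$ no PNE has agent 1 truthful. Your fallback of alternating best responses is not an argument either. Nothing guarantees it terminates, and it does not identify which profile to verify; this is where all the work lies.

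The missing idea is precisely such a \emph{lock-in bid for agent 1}, which is what the paper uses. Agent 1 bids $1+\epsilon_j$ on $g_j$, with $\epsilon_1=0$, all $\epsilon_j$ tiny, and $\epsilon_j>\sum_{t<j}\epsilon_t$; agent 2 bids truthfully. With this bid, between two bundles of equal size agent 1 envies the one containing the highest-indexed item. So whenever agent 2 catches up by taking the newest item, agent 1 envies her, and any envy agent 2 keeps toward agent 1 is removed by a swap. This gives the invariant that after every $2r$ allocated items, either both bundles have equal size and agent 2 does not envy agent 1, or agent 1 is ahead by at least two. It follows that agent 2 gets at most $(m-1)/2$ items whatever she bids. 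Meanwhile a truthful agent 2 is guaranteed $(m-1)/2$ by EF1, so agent 1 cannot improve either. You should therefore build the equilibrium around a non-truthful agent 1 who locks in the extra item, not around a truthful agent 1.
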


\begin{proof}
   Consider an instance with $2$ agents that all have unary valuation functions over the items. We will define a bidding profile that leads to a pure Nash equilibrium. In particular, let the number of items to be $m=l\cdot 2 +q$, where $l \geq 0$ and $q \in \{0,1\}$. If $q=0$, or $l=0$ and $q=1$, the theorem follows from proposition \ref{prop:pneun}. Therefore, for the remainder of the proof, we assume that $l \geq 1$ and $q=1$.

   Recall that for this version of E-C-E we assume that there is a predefined ordering of the agents and the items, and the algorithm follows these orderings in order to assign the next available item, as well as to break ties between sources. So, let agent $1$ bid $v_1(j)=1+ \epsilon_j$ for any item $j \in [m]$, where $\epsilon_1=0$, and for every $j>1$, we have $0<\sum^{j-1}_{t=1} \epsilon_t<\epsilon_j$, and agent $2$ to bid $v_2(j)=1$ for any item $j \in [m]$. We set every $\epsilon_j$, for $j>1$, to be arbitrarily small positive numbers. Under this bidding profile, it is not hard to see that agent $1$ gets $l+1$ items and thus $l+1$ total value, while agent $2$ gets $l$ items, and thus $l$ total value. The reason for this is that the items are being allocated in a round-robin fashion: due to the way the profile is defined, when agent 1 is ahead one item compared to agent 2, she does not have an edge towards agent 2, while agent 2 has an edge towards agent 1, and once assigning the next item, this edge is removed. So always, when the agents have the same number of items, agent 1 receives the next item, and when agent 1 is one item ahead from agent 2, then agent 2 receives the next item.

    Now notice that agent 1 cannot improve her utility to $l+2$ or more, as this would mean that agent 2 would get $l$ or less items, something that would violate the EF1 condition for her (under the declared values). This is a contradiction, as any version of ECE guarantees that an allocation is EF1 for everyone according to the received input.
    Therefore, the only remaining case that needs to be checked is the one where agent 2 deviates in declaring something else than her true values, and manages to get $l+1$ items (notice that no matter the deviation, she cannot get more items than that, otherwise the EF1 condition would be violated for agent 1). We will show that this is also not possible.
    
    We will use induction in $r$ in order to prove that when $2 \cdot r$ items have been allocated to the agents, either there are no edges from agent 2 towards agent 1, and both agents have received the same number of items. Or, agent 1 has currently strictly more items than agent 2 (the difference between the number of their items is then at least 2). The following holds:

    \begin{itemize}[itemsep=2pt, topsep=3pt, leftmargin=9pt]
        \item $r=1$: In this case there are only two items allocated to the agents. The first item is initially being allocated to agent 1 (because of the lexicographic ordering that E-C-E follows). Agent 2, depending on what she declared, either got the second item (by putting an edge towards agent 1) or she did not put an edge towards agent 1, and agent 1 got the second item as well (once again because ties are broken lexicographically). Therefore, we only need to examine the former case. So, when agent 2 got the second item, notice that agent 1 put an edge towards her due to her initial declaration where the second item according to the predefined order, has more value for her than the first one. Now agent 2  either removed the edge from agent 1 (therefore the statement holds), or she kept the edge towards agent 1, and because of the cycle between them,
         they exchanged bundles and at this point there are no more edges between them. We conclude that the statement is true for this base case.
        \item Assume that the statement holds for $r=k$.
        \item $r=k+1$: There are two cases that we need to examine, based on the state of the inductive hypothesis $r=k$:
        \begin{enumerate}[itemsep=2pt, topsep=3pt, leftmargin=13pt]
        \item  \emph{Agent 1 has strictly more items than agent 2.} If there is no edge from agent 2 to agent 1 when $2\cdot k$ items have been allocated, from the inductive hypothesis we have that agent 1 has at least two more items than agent 2. In that case notice that there is no edge from agent 1 towards agent 2, and agent 1 will receive the next item. After that point, either agent 2 still does not put an edge towards agent 1, and thus agent 1 will receive the second item as well, or agent 2 will put an edge towards agent 1, and she will be the one that receives the second item. In both cases, the difference between the cardinality of their set of items remains at least 2, so the statement holds. On the other hand, if there is an edge from agent 2 to agent 1 when $2\cdot k$ items have been allocated, once again, from the inductive hypothesis we have that agent 1 has at least two more items than agent 2. We point out once more that there is no edge from agent 1 towards agent 2, and either the next item goes to agent 2 and then the final one to agent 1 (so at the end of this step, agent 1 has still at least two more items than agent 2) or two items go to agent 2. Regarding the latter, if after agent 2 gets these two items, agent 1 still has strictly more items (meaning 2 or more as their difference cannot be 1 due to the fact that the number of allocated items is even), then the statement follows. On the other hand, if after agent 2 gets these two items, both agents have the same number of items, then it is easy to see that agent 1 will now put an edge towards agent 2. The reason for this, is that the current value of agent 1 according to her declaration is smaller than $(k+1)\cdot 1+\sum^{2\cdot k}_{j=1}\epsilon_j$, which in turn is smaller than $(k+1)\cdot 1+\epsilon_{2\cdot k+1}+\epsilon_{2\cdot k+2}$, due to how each $\epsilon_j$ is defined. The latter amount is smaller than the value of agent 2 from the perspective of agent 1  (based on her initial declaration, as only the value of item 1 is equal to 1). So, in case agent 2 has also an edge towards agent 1, this edge will be eliminated after the cycle elimination. Therefore, the statement follows once more. 
        \item \emph{The agents have the same number of items and there is no edge from agent 2 towards agent 1}. In that case we start with agent 1 that gets the first item. Now if agent 2 does not put an edge towards agent 1 at this point, agent 1 gets the item after that as well, and the statement follows. If on the other hand agent 2 does put an edge towards agent 1 when she gets the first item, then agent 2 is the one that receives the last item. The latter implies that agent 1 will now put an edge towards agent 2. The reason for this, is that the current value of agent 1 according to her declaration is smaller than $(k+1)\cdot 1+\sum^{2\cdot k+1}_{j=1}\epsilon_j$, which in turn is smaller than $(k+1)\cdot 1+\epsilon_{2\cdot k+2}$. The latter amount is smaller than the value of agent 2 from the perspective of agent 1 (as only the value of item 1 is equal to 1). Going back to agent 2, she either removes any edges towards agent 1, or she doesn't and then the edge is removed because of the cycle elimination that occurs. In both cases, the agents receive the same number of items, and there is no edge from agent 2 towards agent 1, thus the statement follows. 
        \end{enumerate}
    \end{itemize}
 So, since this holds for any $r$ it also holds for $r=l$, which means that no matter the declaration of agent 2, the last item, i.e., item $2\cdot l+1$, will go to either agent 1 in case the items have been allocated evenly up to that point, or to agent 2, if agent 2 at that point had 2 or more items less than agent 1. In both cases agent 2 gets at most a value of $l$. This concludes our proof.\end{proof}

\begin{corollary}\label{cor:2una}
The analog of Theorem \ref{thm:PNE2agents} holds for Priority-E-C-E as well.
\end{corollary}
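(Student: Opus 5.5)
We reuse the equilibrium profile from the proof of Theorem \ref{thm:PNE2agents} essentially verbatim and verify that each step still goes through under the Priority-E-C-E source selection rule. As before, write $m = 2l+q$. When $q=0$, or when $l=0$ and $q=1$, Proposition \ref{prop:pneun} already covers all considered variants, Priority-E-C-E included. So assume $l\ge 1$ and $q=1$.

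Agent $1$ bids $1+\epsilon_j$ on item $g_j$, with the rapidly increasing $\epsilon_j$ as in that proof, and agent $2$ bids truthfully (all ones). The item order is the same fixed predefined order.

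The key observation is that with two agents the two rules coincide on every state arising in the argument. A source is a node of in-degree $0$. If both agents are sources, there are no edges at all, so neither is envious and Priority-E-C-E falls back to the predefined agent ordering, exactly like Standard-E-C-E. If exactly one agent is a source, both rules must pick it. Cycles (mutual envy) are resolved before the next selection, so after resolution at most one edge exists. Whenever an edge $i\to j$ exists, $j$ is not a source and $i$ is the unique source, so the priority given to envious agents never changes the choice.

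Hence the execution of Priority-E-C-E on any bid profile with $n=2$ is identical to that of Standard-E-C-E with the same agent and item orderings. The mechanisms induce the same game, so the truthful-for-agent-2 profile constructed above is a PNE with the same EF1 outcome ($l+1$ versus $l$ items).

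The only point needing care is confirming that "envious source" can never be a tie-breaker between two distinct sources when $n=2$. The argument above shows this: two sources means no edges, so no agent is envious. Everything else, namely the inductive argument bounding agent $2$'s deviations and the EF1 argument bounding agent $1$, is a statement about the execution trace and transfers unchanged.
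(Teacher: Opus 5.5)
Your proposal is correct and follows essentially the same route as the paper: the paper's proof also observes that with $n=2$, two sources can exist only when there are no edges at all. The priority-for-envious-agents rule therefore never fires, so Priority-E-C-E and Standard-E-C-E execute identically and the equilibrium of Theorem~\ref{thm:PNE2agents} carries over unchanged.
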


\begin{proof}
    This follows from the fact that Standard-E-C-E and Priority-E-C-E run identically for instances with only two agents. To see this, notice that the difference between the 2 versions of the algorithm, is that in the case of Priority-E-C-E, when there are multiple sources, the algorithm breaks ties in favor of the ones that have outgoing edges, while in the case of Standard-E-C-E the ties are being broken lexicographically. However, when $n=2$, the only way of having two sources is to not have outgoing edges from any of the two agents, therefore this additional rule is never put into action.\end{proof}

We would like to point out, that the profile used in the proof of Theorem \ref{thm:PNE2agents} does not necessarily extend to a profile that induces a pure Nash equilibrium for these two versions of the algorithm and for the case of $n \geq 3$ agents. The presence of a third agent might affect the properties described in the induction of our proof, making such a generalization highly non-trivial. On the other hand, as we will see in the next section, the positive results that we  presented so far, more or less define the limits of what one can expect with respect to the existence of pure Nash equilibria.


\subsection{Impossibility Results}\label{sec:PNEnone}

Here we present results regarding the non-existence of pure Nash equilibria, for most of the considered versions of E-C-E. We demonstrate that the positive results that we have presented in the previous section, do not hold even for scenarios with just two agents and four items, when an instance goes beyond unary with respect to the true valuation functions of the agents.

Before going there, in the next theorem we consider the Best-Item-E-C-E version of the algorithm, and we show that it cannot guarantee stability even for the cases that we explored in the previous section.


\begin{restatable}{theorem}{noPNEbest}\label{thm:noPNE-Best}
    There are instances where the Best-Item-E-C-E does not have pure Nash equilibria. These instances can be as simple as having just 2 agents with unary valuation functions and 3 items.
\end{restatable}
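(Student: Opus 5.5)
The plan is to show directly that, for two agents with unary valuations and three items $g_1,g_2,g_3$, \emph{every} bid profile gives some agent a profitable deviation. This holds for any fixed agent and item tie-breaking orders; without loss of generality agent $1$ precedes agent $2$. With $n=2$, Best-Item-E-C-E behaves like Priority-E-C-E with respect to sources (as argued in Corollary~\ref{cor:2una}), so only the item choice is new. Since the true values are unary, utility is just the number of items received. In any outcome with three items, some agent gets at most one item. It therefore suffices to prove two claims: (i) for every $\bm{b}_1$, agent $2$ has a bid that gives her two items; (ii) for every $\bm{b}_2$, agent $1$ has a bid that gives her two items. Together these contradict the existence of a PNE, because both agents cannot end up with two of the three items.

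First I trace the execution. Initially both bundles are empty, there is no envy, and agent $1$ is chosen by the tie-breaking order. She takes her declared-best item $a$. If $b_{2}(a)=0$, there is no envy, both agents are sources, and agent $1$ takes a second item. If $b_2(a)>0$, agent $2$ envies, so she is the unique source and takes her declared-best remaining item $c$. At that point agent $1$ does not envy agent $2$, since $b_1(a)\ge b_1(c)$ by the Best-Item rule and envy requires strict inequality. So no cycle arises. The third item goes to agent $2$ exactly when $b_2(a)>b_2(c)$, and otherwise to agent $1$ by the tie-breaking order.

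For (i), given $\bm{b}_1$, the item $a$ is determined by $\bm{b}_1$ alone. Agent $2$ then bids $b_2(a)=2$ and value $1$ on both other items. She envies after the first pick, receives some $c\neq a$ with $b_2(c)=1<2=b_2(a)$, still envies, and so receives the last item. For (ii), given $\bm{b}_2$, agent $1$ lets $a$ be an item minimising $b_2$, and bids $2$ on $a$ and $1$ elsewhere, so that $a$ is her strict declared best. If $b_2(a)=0$, she gets two items immediately. Otherwise agent $2$'s best remaining item $c$ satisfies $b_2(c)\ge b_2(a)$, so agent $2$ stops envying and agent $1$ takes the third item.

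The main obstacle is the careful case analysis of the envy graph after each step. The points to get right are strict versus weak envy, the absence of cycles in round two, and how ties in the source order and in the item order interact. I would handle these by making every bid used in a deviation strict on the relevant item, so that tie-breaking between items never matters.
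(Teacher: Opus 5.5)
Your proof is correct and is essentially the paper's argument: whichever agent ends with at most one item deviates to obtain two. Your agent 2 overbids agent 1's first pick by exactly the sum of the other two items, matching the paper's $1$ vs.\ $0.5,0.5$, and your agent 1 first takes an item that agent 2 does not strictly prefer to her best remaining one (your minimum, the paper's second-best); your framing via claims (i)--(ii) also absorbs the paper's separate 3--0 case.

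The one check you should state explicitly is that no cycle elimination happens after the last item is assigned, which the paper does note. In (i) this holds precisely because $b_2(c)+b_2(d)=2=b_2(a)$, so agent 2 stops envying; a larger overbid could allow a final swap that leaves her with only $\{a\}$.
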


\begin{proof}
  Consider an instance with two agents and three items. The true valuation functions of the agents are unary, i.e., the value of every item for them is equal to 1. We would like to prove that in this case there is no pure Nash equilibrium, i.e., for every possible allocation of the items, at least one of the agents can make a different report and strictly improve her utility. To this end, initially notice that allocations where one of the agents get all three items, while the other gets nothing, cannot be an equilibrium as the agent that got no items can always declare her true values (a value of 1 for each of the items) and get at least one item as any version of E-C-E always guarantees allocation that are EF1 according to the received input. Therefore, declaring a strictly positive value for all the items, makes it not possible for an agent to get no items at all. Therefore, the only cases that remain to examine are the ones where agent 1 gets 2 items and agent 2 gets 1 item, and vice versa. We will show that in both cases, the agent that receives just one item can always change her strategy and get two items.
  
  Starting with the case where agent 1 gets 2 items and agent 2 gets 1 item, initially notice that regardless of what agent 1 has declared, when she gets her first item (and she is the one that will get the first item as ties are broken lexicographically when there are no outgoing edges), this item is her most favorable one, according to her declaration. This implies that if agent 2 gets the second item (because she put an edge to the first agent after she got the first item), agent 1 cannot put an edge towards agent 2. Therefore the best strategy that agent 2 can follow is to declare that the item that agent 1 got has a value 1 for her, and the remaining items have a value of $0.5$. This will guarantee that she will put an edge towards agent 1 after she gets the first item, this will give her the second item, and since the edge that she put remains towards agent 1 (and agent 1 cannot put an edge towards her) she will get the third item as well and then she will remove the edge so that there is no possibility of a cycle elimination (in case agent 1 puts an edge towards agent 2 due to receiving this second item). This strategy guarantees her a total value of 2, which is strictly better than before.

  Moving to the case where agent 2 gets 2 items and agent 1 gets 1 item, consider any declaration of agent 2 that led to an allocation where she gets 2 items. What agent 1 can now do, is to declare as her best item the second best item of agent 2 (in case there is a tie then it is broken arbitrarily). In that way, since she is the one that gets the first item, now she might have an edge towards her or not, depending on if agent 2 declared a positive or a value of 0 for her second best item. In the first case notice that agent 2 gets the next item, but now this item has more or equal value (according to her declaration) than the one that agent 1 got (as she gets her best one), and therefore, after getting this item, she has to remove the edge from agent 1. In addition, notice that there is also no edge from agent 1 towards agent 2 at this point (for the same reason). Thus, because Best-Item-E-C-E now breaks ties between the agents lexicographically, agent 1 gets the third and final item, and this makes her total value 2, which is strictly better than before. In the second case, once again, as ties are broken lexicographically, agent 1 gets the second item, and since at this point she has (and will not put) no edge towards agent 2, she ends up with a utility of 2, which is strictly better than before. This completes our proof. \end{proof}

  We proceed with the next theorem (and corollary) that considers Standard-E-C-E and Priority-E-C-E, and we show that in instances with two agents, it is enough for the agents to have a value of 0 for just one item (and a value of 1 for the rest), for the existence of pure Nash equilibria to be broken.



\begin{restatable}{theorem}{noPNEstandard}\label{thm:noPNE-Standard}
    There are instances where the Standard-E-C-E does not have pure Nash equilibria. These can be as simple as having just 2 agents with identical binary valuation functions and 4 items.
\end{restatable}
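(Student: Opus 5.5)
The plan is to fix one concrete instance with two agents and four items $g_1,g_2,g_3,g_4$, allocated in this fixed order, with agent 1 winning ties between sources. Both agents have the identical binary valuation $v=(1,1,1,0)$ or some similar profile with exactly one zero-valued item. Since the number of valued items, three, is neither less than $n=2$ nor a multiple of $2$, Proposition~\ref{prop:pneun} does not apply. The two agents compete for the odd third valuable item, and this is where cycling of best responses should appear.

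The argument would then show that no profile is stable. First, for any profile, the outcome is EF1 with respect to the declared values. So an agent who declares strictly positive values for the valued items always gets at least one of them. Exploiting this, the total of three valued items can only be split $(2,1)$ or $(1,2)$, or in more skewed ways that the deprived agent can fix by truth-telling, as in the proof of Theorem~\ref{thm:noPNE-Best}. Next, I would case-split on which agent ends with two valued items, and in each case exhibit a deviation for the other agent that secures two.

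If agent 2 has two valued items, agent 1 should be able to use the $\epsilon$-trick from the proof of Theorem~\ref{thm:PNE2agents}. She bids strictly increasing values along the item order. Then whenever agent 2 pulls ahead, agent 1 envies her, and any resulting cycle swaps bundles back in agent 1's favour.

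If agent 1 has two valued items, agent 2 should exploit the zero item $g_4$ and the tie-breaking. She declares values that create an edge toward agent 1 right after $g_1$, then acquires $g_2$ and $g_3$. She then places her envy so that $g_4$, which is worthless to her, falls to agent 1, or so that a cycle elimination hands her a bundle with two valued items. Because the fixed item order puts the worthless item last, the dynamics are asymmetric between the agents. This asymmetry is what should break the symmetric stability argument of Theorem~\ref{thm:PNE2agents}.

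The main obstacle is the first case. The deviation for agent 1 must work against every bid of agent 2, not just a specific one. Agent 2 can block or unblock edges, and cycle swaps can reverse bundles at several points. I would handle this with an exhaustive but short case analysis over the four steps, tracking who envies whom after each item, much like the induction in Theorem~\ref{thm:PNE2agents} but only two rounds deep. If the chosen zero position, $g_4$, fails, I would move the zero to $g_1$ or $g_2$ and redo the check, or simply enumerate the finitely many qualitatively distinct bid orderings.
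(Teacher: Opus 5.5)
Your proposal has a genuine gap: the instance you commit to, $v=(1,1,1,0)$ with the worthless item last, actually has a pure Nash equilibrium. So the second half of your case split (``if agent~1 has two valued items, agent~2 can deviate'') cannot be carried out.

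Agent~1 receives $g_1$ for free, by tie-breaking on the empty graph. She can then protect a second valued item by declaring $(b_{11},b_{12},b_{13},b_{14})=(1,3/2,1,0)$:
\begin{itemize}
\item If $b_{21}=0$, she also gets $g_2$. Since $b_{13}\le b_{11}+b_{12}$ and $b_{14}=0$, she never envies again, so no swap can take $\{g_1,g_2\}$ from her.
\item If $b_{21}>0$, agent~2 takes $g_2$, and agent~1 now envies her ($1<3/2$). If agent~2 still envies, the swap gives agent~1 $\{g_2\}$; she then takes $g_3$ and never envies again. If agent~2 does not envy, agent~1 is the unique source and takes $g_3$. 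Since $b_{11}+b_{13}\ge b_{12}$ and $b_{14}=0$, she never envies again.
\end{itemize}
In every branch agent~1 keeps two of $g_1,g_2,g_3$, so agent~2 is capped at one valued item. A truthful agent~2 gets at least one valued item, by EF1 on reports. Hence this bid paired with truthful agent~2 is a PNE.

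The deviation you sketch for agent~2 (keep an edge after $g_1$, then collect $g_2$ and $g_3$) fails precisely because agent~1 can make herself envious right after $g_2$, forcing a swap. Your intuition that a worthless last item creates a helpful asymmetry is backwards: as long as $g_1$ is valuable, agent~1's tie-break pick is a free valued item. Your $\epsilon$-trick for agent~1 is also unsafe here. Any strictly increasing bid gives $g_4$ positive declared value, and against a truthful agent~2 the assignment of $g_4$ triggers a final swap that leaves agent~1 with $\{g_2,g_4\}$.

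Your fallback (``move the zero and redo the check'') is where all the content of the proof lies, and it is not routine. With the zero at $g_2$, i.e.\ $v=(1,0,1,1)$, agent~1 again secures two valued items, e.g.\ with the bid $(1,0,2,1)$, so that instance also has a PNE. In these instances truthful reporting guarantees each agent one valued item. Hence a PNE exists if and only if some agent has a bid securing two valued items against every bid of the other, which gives you a quick test for candidate instances.

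The paper puts the worthless item first, $v=(0,1,1,1)$, which neutralizes agent~1's free pick. It then shows that neither agent has a securing bid, using four explicit, opponent-dependent responses:
\begin{itemize}
\item against $b_{12}\ge b_{13}$, agent~2 bids $(0,1,0.5,0.6)$;
\item against $b_{12}<b_{13}$, agent~2 bids $(1,1,1,1)$;
\item against $b_{21}=0$, agent~1 bids $(0,0,1,1)$;
\item against $b_{21}\neq 0$, agent~1 bids $(0,1,0,1)$.
\end{itemize}
Your logical skeleton matches the paper's: EF1 on reports gives each agent one valued item, so in any profile the agent left with one must have a profitable deviation. But without the right instance and these concrete deviations, the proposal does not prove the statement.
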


\begin{proof}
    Consider an instance with two agents and four items, where both have binary and identical values for the items. In particular, let $v_i(1)=0$, and $v_i(j)=1$, for every $i \in\{1,2\}$ and $j \in \{2,3,4\}$ (where we slightly abuse the notation and refer to item $j$ rather than $g_j$, for readability). Notice also that if there was a pure Nash equilibrium for this instance, then an agent would get at least one item of value 1, as otherwise she could improve her utility by just declaring her true values. We will split the proof into cases and we will show that it is always possible for an agent to get at least two items from $\{2,3,4\}$, regardless of what the other agent declares. This would imply that in any possible allocation that could be produced by a profile that is a pure Nash equilibrium, there is always a profitable deviation for the agent that gets a total value of $1$. 
    \begin{itemize}[itemsep=2pt, topsep=3pt, leftmargin=9pt]
        \item Agent 1 makes a declaration where $v_1(2) \geq v_1(3)$. In that case, agent 2 declares $v_2(1)=0$, $v_2(2)=1$, $v_2(3)=0.5$, and $v_2(4)=0.6$. It is easy to see that the first two items will go to the first agent and the third item will go to the second agent. At this point agent 1 does not have an edge towards agent 2 and agent 2 has an edge towards agent 1. Therefore, agent 2 gets the fourth item as well, and since at this point she has no edge towards agent 1, the mechanism stops.
        \item Agent 1 makes a declaration where $v_1(2) < v_1(3)$. In that case, agent 2 declares $v_2(j)=1$ for any $j \in [m]$. It is easy to see that the first item goes to the first agent and the second item goes to agent 2. As agent 2 does not have an edge towards agent 1, agent 1 gets the third item and now she removes any edges (if any) towards agent 2. At this point agent 2 has an edge towards agent 1, and thus she gets the fourth item as well, and since at this point she has no edge towards agent 1, the mechanism stops.
        \item Agent 2 makes a declaration where $v_2(1)=0$. In that case, agent 1 declares $v_1(1)=0$, $v_1(2)=0$, $v_1(3)=1$, and $v_1(4)=1$. It is easy to see that the first two items go to agent 1. At this point, if agent 2 does not have an edge towards agent 1, then agent 1 gets the third item  as well, and she ends up with two items from set$\{2,3,4\}$ in the final allocation. If on the other hand agent 2 does have an edge towards agent 1, then agent 2 gets the third item. In case, agent 2 removes her edge, then agent 1 gets items 1,2 and 4. In case agent 2 keeps her edge, notice that agent 1 also has an edge towards agent 2, so the cycle between the two agents will be eliminated and agent 1 ends up getting items 3, 4. 
        \item Agent 2 makes a declaration where $v_2(1)\neq 0$. In that case, agent 1 declares $v_1(1)=0$, $v_1(2)=1$, $v_1(3)=0$, and $v_1(4)=1$. It is easy to see that the first item goes to the first agent and the second item goes to the second agent. At this point, agent 1 has an edge towards agent 2. If agent 2 does have an edge towards agent 1 too, the cycle between them is eliminated, agent 1 gets item 2 and at the next step she gets item 3 as well. Therefore, she ends up with at least two items from $\{2,3,4\}$. If on the other hand agent 2 does not have an edge towards agent 1, then agent 1 gets item 3 as well. If agent 2 puts an edge towards agent 1 at this point, since agent 1 also has an edge towards agent 2 (recall that at this step she has items 1 and 3 that both have 0 value for her according to what she declared), the cycle between the two is eliminated (therefore she gets item 2), and in the next step agent 1 gets item 4 as well. Therefore, she ends up with items 2 and 4. Finally, if agent 2 does not put an edge towards agent 1 at this point, then agent 1 gets item 4 as well, and thus, items 1,3, and 4 at the final allocation. 
    \end{itemize}
    
    As these four cases capture every possible scenario, we can derive that there is always a profitable deviation for an agent regardless of what the other agent declared, and thus no pure Nash equilibrium exists for this instance.\end{proof}

  \begin{corollary}\label{cor:idbin}
    There are instances where Priority-E-C-E does not have pure Nash equilibria. These instances can be as simple as having just 2 agents with identical  binary valuation functions and 4 items.
\end{corollary}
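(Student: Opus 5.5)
The plan is to reduce Corollary~\ref{cor:idbin} to Theorem~\ref{thm:noPNE-Standard} in the same way that Corollary~\ref{cor:2una} was obtained from Theorem~\ref{thm:PNE2agents}. The key point is that, for any bid profile on two agents, Standard-E-C-E and Priority-E-C-E produce exactly the same execution. Once this is established, the two mechanisms induce the same allocation on every profile of the 4-item instance with $v_i(1)=0$ and $v_i(j)=1$ for $j\in\{2,3,4\}$. So the games they define are identical, and the non-existence of a pure Nash equilibrium carries over verbatim.

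\textbf{Step 1: the mechanisms coincide when $n=2$.} I would argue this step by step along the while loop.
\begin{itemize}[itemsep=1pt, topsep=2pt, leftmargin=9pt]
\item Both variants use the same fixed item ordering in line~\ref{line:good}.
\item Both resolve cycles in line~\ref{line:resolve} in the same way. With two agents, the only possible cycle is $1\to 2\to 1$, and it is resolved by swapping the two bundles.
\item The only difference between the variants is the choice of source in line~\ref{line:source} when there are several sources. With two agents, both are sources only if neither envies the other, that is, when there are no edges at all. In that case no source is envious, so the priority rule of Priority-E-C-E is vacuous and both variants fall back to the same predefined agent ordering.
\item If exactly one agent is a source, both variants must pick it.
\end{itemize}
By induction on the number of allocated items, the partial allocations and envy graphs of the two executions agree at every step, and hence so do the outputs.

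\textbf{Step 2: transfer the theorem.} Since the output maps $\mathbf{b}\mapsto\mathcal{M}(\mathbf{b})$ are identical, a profile $\mathbf{b}$ is a pure Nash equilibrium under Priority-E-C-E if and only if it is one under Standard-E-C-E. Theorem~\ref{thm:noPNE-Standard} exhibits an instance where the latter has none, so the former has none either. I expect no real obstacle here. The only point needing care is confirming that the tie-breaking orders (agents and items) are taken to be the same in both variants, which matches how the variants are defined in Section~\ref{subsec:e-c-e}.
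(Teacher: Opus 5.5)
Your proposal is correct and follows the paper's proof: the paper also obtains the corollary from Theorem~\ref{thm:noPNE-Standard} by observing that, with two agents, both are sources only when the envy graph has no edges, so Priority-E-C-E's preference for envious sources never applies and the two mechanisms run identically on every profile. Your induction over the loop and your explicit transfer of the equilibrium notion spell out what the paper states in one line.
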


\begin{proof}
    As with Corollary \ref{cor:2una}, the statement is derived directly from the fact that Standard-E-C-E and Priority-E-C-E run identically for instances with only two agents. \end{proof}

Due to the results above, in the next section we turn our attention to what is achievable in terms of fairness guarantees, when agents play their best responses to what the rest of the agents have declared. As we will show, despite the complex nature of E-C-E, \textit{there are} versions where when an agent plays her best strategy  towards the others (and thus achieving the best possible utility for her), the fairness guarantees of the algorithm can be (approximately) preserved.

\section{E-C-E and Fair Best Responses}\label{sec:BR}

As we have so far shown,  the considered versions of E-C-E do not have good behavior in terms of the existence of pure Nash equilibria, even for very simple scenarios. This motivates our next approach, where we explore whether we can have fairness guarantees for an agent who plays her best response to what the rest of the agents played. We point out here that proving that any best response provides fairness guarantees, would mean that an allocation would provide that same guarantee to everyone at a pure Nash equilibrium (in case one exists), as a pure Nash equilibrium is a collection of best responses. Therefore, \textit{any fairness guarantee for best-responding agents also holds for all agents in any (potentially) existing pure Nash equilibrium}.

Our focus will be on the two and three agents case, for both of which we provide positive results.

\subsection{Instances with Two Agents}\label{sec:BR2}

We begin with the case of two agents. As we will see in the next theorem, most of the considered versions of E-C-E guarantee EF1 fairness, if we look at the allocation from the perspective of an agent that plays her best response.

\begin{theorem}\label{thm:2EF1}
    Consider any version of E-C-E among Standard, Priority, Preferred-Item, and an instance with 2 agents with additive valuation functions. Then, the produced allocation is always EF1 from the perspective of any agent who plays a best response.
\end{theorem}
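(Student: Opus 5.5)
The plan is a proof by contradiction based on a profitable deviation. Fix the agent $i$ who best responds, call the other agent $j$, and write $A=\mathcal{M}(\mathbf{b})$. Suppose $A$ is not EF1 from $i$'s true perspective, i.e.\ $v_i(A_i) < v_i(A_j - g)$ for every $g\in A_j$. With only two agents the items are split between $i$ and $j$, so $v_i(A_i)+v_i(A_j)=v_i(M)$. Hence any bid giving $i$ a strictly larger true value contradicts that $\bm{b}_i$ is a best response, and the whole proof reduces to constructing such a bid $\bm{b}'_i$. This deviation argument is, I expect, the heart of the proof.

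Key steps, in order:
\begin{enumerate}
\item \emph{Simulating the run.} The mechanism is deterministic and $\bm{b}_j$ is fixed, so $i$ can predict the entire run for any bid of her own. For Standard- and Priority-E-C-E, Corollary~\ref{cor:2una}'s argument shows the two variants coincide for $n=2$, so they can be treated together. In both, the item order is fixed, so $i$ knows which item arrives at each step. For Preferred-Item-E-C-E, $i$ also controls which item she picks through $\textrm{pref}_i$, which only enlarges her strategy space.
\item \emph{Locating the damaging step.} In the run under $\mathbf{b}$, let $g^\ast$ be the last item added to $j$'s \emph{final} bundle (tracking bundles through swaps). Removing $g^\ast$ does not make $A$ EF1 for $i$, so the bundle $S$ that eventually becomes $A_j$ was already worth more to $i$ (truly) than $i$'s final bundle at the moment just before $g^\ast$ was added. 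Moreover, $j$ could receive items then only because $i$'s \emph{declared} values showed no envy towards $j$.
\item \emph{Building the deviation.} Let $\bm{b}'_i$ agree with $\bm{b}_i$ on all items allocated before that moment, so the run is identical up to that point. Afterwards, $\bm{b}'_i$ declares very large values for the items $j$ is about to receive, and values on remaining items chosen so that $i$ declares envy of $j$ exactly when she truly does. For Preferred-Item-E-C-E, $\textrm{pref}_i$ is set to follow true values. Then one of two things happens. Either $i$ keeps an outgoing edge to $j$, so $j$ is never a source while $i$ envies, and $i$ collects items $j$ would otherwise have taken. Or $j$ also envies $i$, so a cycle forms and the bundles are swapped, which gives $i$ the bundle $S$ that is truly worth more to her.
\item \emph{Comparing values.} In both cases $i$'s final true value must strictly exceed $v_i(A_i)$. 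I would prove this by tracking, step by step, that $i$'s true value under $\mathbf{b}'$ dominates her value under $\mathbf{b}$, using the complementarity identity above.
\end{enumerate}

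The main obstacle is step 4. A swap or newly created edge under $\mathbf{b}'$ changes who is a source later on, and this can alter the future sequence of recipients. I would handle this with an exchange or monotonicity argument in the style of the induction used in Theorem~\ref{thm:PNE2agents}. Specifically, the invariant is that at every even time step, the true value of the bundle $i$ holds under $\mathbf{b}'$ is at least the true value she holds under $\mathbf{b}$, and $j$'s declared envy is no more favourable to $j$. If this invariant proves too weak, I would instead compare against the simpler deviation ``declare truthfully from the critical step on''. EF1 of E-C-E with respect to declared values then gives $v_i(B_i)\ge v_i(B_j - g)$ for the resulting allocation $B$, and I would combine this with the observation that $j$'s bundle under $\mathbf{b}'$ is contained in $S$ plus at most one extra item.
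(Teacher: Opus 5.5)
\textbf{There is a genuine gap.} The boosting-and-cycle mechanism in your Step~3 is the right tool, but you anchor it at the \emph{last} item $g^\ast$ added to $A_j$, and that anchor cannot work. Since $\bm{b}'_i$ agrees with $\bm{b}_i$ on every item allocated before $g^\ast$, the envy graph at that moment is unchanged, so $g^\ast$ is still added to $S$. After that, $A_j$ is complete, and with two agents every remaining item went to the other bundle in the original run anyway. So if $j$ never comes to envy $i$, your deviation just reproduces $A$ and gives no contradiction.

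Concretely, let $v_i$ be unary on four items, let $\bm{b}_i$ declare $0$ everywhere and $\bm{b}_j$ declare $1$ everywhere, with $j$ first in the tie-breaking order. Then $j$ receives all four items, so $A$ is not EF1 for $i$. Your critical moment is just before item $4$, and boosting item $4$ (or bidding truthfully from there on) changes nothing, since no items remain.

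The same example defeats your fallback. EF1 with respect to the hybrid bid (old bids on early items, true values afterwards) holds here, while true EF1 fails. The claimed containment of $j$'s bundle in $S$ plus one item is also unsupported. Your Step~4 invariant cannot be maintained either, since a single new edge reshuffles all later recipients.

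\textbf{The missing idea is the opposite extremal choice,} which is what the paper uses. Let $t$ be the \emph{first} moment at which $j$ holds a bundle $B$ with $v_i(B)>v_i(A_i)$.
\begin{itemize}
\item \emph{Lock-in.} Under $\mathbf{b}$, $i$ never held a bundle worth more than $v_i(A_i)$. Otherwise, declaring the item that completed it to be worth more than all other items together would let her keep it. Hence $B$ arises from $j$ receiving a fresh item $k$ into her bundle $B-k$, and $v_i(B-k)\le v_i(A_i)$.
\item \emph{Deviation.} Change \emph{only} the declared value of $k$, making it exceed all other items combined. The run is unchanged up to $t$. Afterwards $i$ envies whoever holds $k$, so $j$'s bundle stays $B$.
\item \emph{Cycle case.} If a cycle ever forms, $i$ obtains $B$ and never envies again, a strict gain.
\item \emph{No-cycle case.} Otherwise $i$ receives every later item, so $A^\ast_j=B$ and $A^\ast_i=M\setminus B=A_i\cup(A_j\setminus B)$. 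Either $i$ strictly gains, or $v_i(A_j\setminus B)=0$. In the latter case, $i$'s most valuable item $g$ of $A_j$ has positive value and so lies in $B$, giving $v_i(A_j-g)\le v_i(B-k)\le v_i(A_i)$. Thus $A$ was EF1 after all.
\end{itemize}
Minimality of $t$, which supplies $v_i(B-k)\le v_i(A_i)$, is exactly what anchoring at $g^\ast$ lacks. In the example above, $t$ is the very first step, and boosting item~$1$ already secures $i$ positive value.
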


\begin{proof}

    Suppose that $\bids=(\bids_i, \bids_{i'})$ is a bidding profile, where, without loss of generality, $\bids_i$ is the best response of agent $i$ to the bid $\bids_{i'}$ of agent $i'$. Also let $(A_i, A_{i'})$ be the produced allocation. Now assume the allocation is not EF1 from $i$'s perspective, i.e.,  $i$ still envies $i'$ even after the removal of one item. This means that $v_i(A_i)<v_i(A_{i'} - g^*)$, where $g^*$ is the item in $A_{i'}$ that has the highest value according to $v_i(\cdot)$, and which also has a strictly positive value as otherwise $v_i(A_{i'})=0$, a contradiction. We will show that there is a bidding profile $\{\bids^*_i, \bids_{i'}\}$ to which agent $i$ can deviate to, 
    that leads to an allocation $(A^*_i, A^*_{i'})$ which 
    either strictly improves $i$'s value (contradicting the optimality of $\bids_i$) or serves as a witness to the original allocation already being EF1 from $i$'s perspective (contradicting the assumption).

    Consider profile $\{\bids_i, \bids_{i'}\}$ and let $t$ be the point in time that agent $i'$ gets for the first time a subset of items $B$ for which we have $v_i(A_i)<v_i(B)$, i.e., a bundle of more value than the value she ends up having in the final allocation. Notice that such a point in time must exist, as otherwise $v_i(A_i)>v_i(A_{i'})$, contradicting the assumption that the allocation is not EF1 for agent i. We would also like to point out here that it is not possible for agent $i$, at some point in time before $t$, to have had the possession of bundle $B$ (and bundle $B$ ended up with agent $i'$ at point $t$ due to a cycle elimination), as this would contradict the fact that $\bids_i$ is a best response to $\bids_{i'}$. 
    
    To see this, assume that this is the case, and notice that bundle $B$ must have been formed at a point in time before $t$, when an item $j$ (which at that point was not yet allocated to any of the agents) was added to the then current bundle of agent $i$. The only other way would be for agent $i$ to get bundle $B$ from agent $i'$ via an envy cycle elimination, but this would contradict the definition of point $t$. Therefore, agent $i$ could declare the same values and item orderings (depending on the version of E-C-E that we consider) as before, but change the declaration only for item $j$ to a value that is more than the sum of the values of all the items. This would guarantee that bundle $B$ will be formed as before, and from that point on-wards, there will be no edges from agent $i$ towards agent $i'$ until the end of the run of the algorithm, guaranteeing that agent $i$ ends up with bundle $B$, a contradiction. Under the same arguments, we can conclude that at point $t$ agent $i'$ receives some item $k$ (that at that point was available and not allocated to any of the agents), and bundle $B$ is formed in her possession. 
    
    So, we proceed as follows: Let bid $\bids^*_i$ be exactly the same as $\bids_i$, but having the value for item $k$ to be more than the sum of the values of all the other items. 
    So at point $t$ agent $i$ puts an edge towards agent $i'$ that will never be removed unless a cycle elimination happens between the two agents. We will split the proof into two cases, showing that either $v_i(A^*_{i})>v_i(A_i)$ (which at the same time also proves that $\bids_1$ is different than $\bids^*_1$), or that if $v_i(A_i)=v_i(A^*_i)$, then $v_i(A_i)\geq v_i(A_{i'} - g^*)$, a contradiction.

\begin{itemize}[itemsep=1pt, topsep=2pt, leftmargin=9pt]
    \item At point $t$ and after receiving item $k$, agent $i'$ puts an edge towards agent $i$: In that case, there is a cycle between agent $i$ and agent $i'$ that is resolved by E-C-E, giving bundle $B$ to agent $i$. As now agent $i$ has item $k$ in her possession, she never again puts an edge towards agent $i'$, and ends up with bundle $B$ in the final allocation, thus increasing her utility and resulting in a contradiction.
    \item At point $t$ and after receiving item $k$, agent $i'$ does not put an edge towards agent $i$: 
    From that point onwards, agent $i$ starts to receive items until either agent $i'$ also reports envy, or there are no more items. In the former case, there will be a cycle between the two agents, the resolve of which will give to agent $i$ item $k$. Thus, she will never again put an edge towards agent $i'$, and she will end up with bundle $B$ in the final allocation, a contradiction. 
    In the latter case, agent $i$  gets all the items after point $t$. Now, note that if we go back to profile $(\bids_i, \bids_{i'})$, and consider the bundle that agent $i'$ received, it is not hard to see that $B \subseteq A_{i'}$: the reason this holds is that at point $t$, bundle $B$ was formed at this profile as well, and due to E-C-E never breaking up any bundle, it must be either part of $A_i$ or part of $A_{i'}$. As $v_i(A_i)<v_i(B)$, the claim follows.  
    Returning back to profile $(\bids^*_i, \bids_{i'})$, and using the observation above, we can derive that $A_i \subseteq A^*_i$, $A^*_{i'}=B \subseteq A_{i'}$ and, finally, that agent $i$ is EF1 towards agent $i'$ in this profile. If at least one of the items $g \in A^*_i \setminus A_i$ has strictly positive value for agent $i$, then she achieves a higher utility by playing $\bids^*_i$, a contradiction. Finally, if the items $g \in A^*_i \setminus A_i$ have zero value for agent $i$ or if $A_i=A^*_i$, then $g^* \in B \subseteq A_{i'}$ and $v_i(B)=v_i(A_{i'})$, since, as we pointed out in the beginning, $v_i(g^*)>0$, and thus, it cannot be part of the newly acquired items. Therefore, $v_i(A_i)=v_i(A^*_i)\geq v_i(B - g^*)=v_i(A_{i'} - g^*)$, a contradiction.
\end{itemize}
This completes our proof.\end{proof}

Note that Best-E-C-E is not included in the scope of Theorem \ref{thm:2EF1}; this is because of the way items are assigned there. In particular, the fact that an agent always gets her highest-valued leftover item when the algorithm chooses her may, in a sense, limit her strategy space. An example would be that if an agent gets the first $r>2$ items in the beginning of the run of the algorithm, then this would dictate that she has to wait for the rest of the agents to also get at least $r$ items before she is able to ``envy'' them. This is not the case with the other three versions of E-C-E and this is crucial for the proof. 


        

        

\subsection{Instances with Three Agents}\label{sec:BR3}

In this section, we go beyond the case of instances with just two agents. As expected, adding just one more agent can make the strategy space a lot more complex. A result of this, as the next theorem demonstrates, is that for certain versions of E-C-E, the fairness properties of best responses (according to the true valuation functions of the agents) are not necessarily preserved anymore. Surprisingly, this can be shown even for instances where the true valuation functions of the agents are binary, and there are only two items to be allocated!


\begin{restatable}{theorem}{standardnoEFoPNE} \label{thm:standard-noEF1-PNE}
    Consider Standard-E-C-E and instances with at least three additive binary agents. Then, there exist instances where some best responses induce allocations that are not (even approximate) EF1, with respect to the true valuation functions of the agents.
\end{restatable}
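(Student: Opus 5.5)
The plan is to give an explicit counterexample with three agents, two items $g_1, g_2$, and the predefined orders $1,2,3$ for agents and $g_1, g_2$ for items used by Standard-E-C-E. Take all three true valuation functions to be $v_i(g_1)=v_i(g_2)=1$, which is binary (indeed unary). The idea is to fix bids of agents $1$ and $2$ under which agent $3$ can never receive any item, whatever she declares. Then every bid of agent $3$ is a best response. In particular, so is a bid producing an allocation in which agent $1$ receives both items, and that allocation is not $\alpha$-EF1 for agent $3$ for any $\alpha>0$.

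Concretely, let agent $1$ bid truthfully, $(1,1)$, and let agent $2$ bid $b_{2}=(0,1)$, i.e., value $0$ for $g_1$. Preference lists are irrelevant for Standard-E-C-E. I would trace the execution for an arbitrary bid of agent $3$:
\begin{itemize}[itemsep=1pt, topsep=2pt, leftmargin=9pt]
\item Initially all agents are sources, so agent $1$ receives $g_1$.
\item Agent $2$ never envies agent $1$ at this point. If agent $3$ declares $b_{31}>0$, she envies agent $1$, and the sources are agents $2$ and $3$. The lexicographic order then gives $g_2$ to agent $2$.
\item If instead $b_{31}=0$, agent $1$ is still a source, and she receives $g_2$.
\item Cycle elimination can never hand a bundle to agent $3$. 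Her bundle is empty throughout, so nobody envies her, she lies on no cycle, and she never trades bundles.
\end{itemize}
Hence agent $3$'s utility is $0$ under every bid, and every bid of hers is a best response to $(\bm{b}_1,\bm{b}_2)$.

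Finally, I would pick the best response $b_3=(0,0)$. By the trace above, it yields $A_1=\{g_1,g_2\}$, $A_2=A_3=\emptyset$. Then for either $g\in A_1$ we get $v_3(A_1-g)=1>0=v_3(A_3)$, so $v_3(A_3)\ge \alpha\, v_3(A_1-g)$ fails for every $\alpha>0$. For more than three agents, I would add extra agents ranked after agent $3$ who bid $0$ on everything and receive nothing; they never create envy toward agent $1$, so the trace is unchanged.

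The main obstacle is the trace itself: I must check that no bid of agent $3$ can steal an item through a cycle. This is settled by the observation that an agent with an empty bundle has in-degree zero but only receives an item when chosen as source, and the lexicographic tie-breaking always prefers agent $1$ or agent $2$ over agent $3$ at the only decision point, the allocation of $g_2$.
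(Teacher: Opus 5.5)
Your proof is correct and is essentially the paper's argument: three agents, two items, and bids for agents $1$ and $2$ under which agent $3$ is never the source selected to receive $g_2$. Hence every bid of agent $3$ is a best response, including the non-envious one that gives both items to agent $1$.

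The one difference is that the paper gives agents $1$ and $2$ zero true values and has them bid truthfully, which makes the whole profile a pure Nash equilibrium. In your instance agent $2$ could instead bid $(1,1)$ and win $g_2$, so your profile is not an equilibrium. That is harmless here, since the statement as written only concerns an agent playing a best response.
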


\begin{proof}
Consider an instance with three agents and two items. Now, let agent 1 and agent 2 have a value of 0 for both the items, while agent 3 has a value of 1, again for both of them. Regarding the bidding profile, let agents 1 and 2 report truthfully (therefore, they never put an edge towards another agent at any step of the algorithm). Notice that they are indifferent between any two allocations. We are, thus, interested in agent 3. We will show that no matter her declaration, she cannot get a positive value.

So, we  begin with agent 1 who gets the first item, as there is no edge in the graph and ties between sources are broken lexicographically. Now agent 3, based on what she declares, she can either put an edge towards agent 1 or not. In case her declaration is such so that an edge is put, then agent 2 gets the second item (as both agents 2 and 3 are sources, and the tie between them is resolved lexicographically) and agent 3 ends up with a value of 0. In case she does not put an edge towards agent 1, then agent 1 gets the second item (as all the agents are sources, and the tie between them is resolved lexicographically) and agent 3, once again, ends up with a value of 0. Since these are the only 2 strategies that agent 3 can follow, and in both of them she gets a value of 0, both of them are best responses. However, the second one is not $\epsilon$-EF1 for her, as she envies agent 1 up to one item (for any $ 0<\epsilon \leq 1$). 

The described instance can be generalized to one with $n$ agents, where every agent $i \in \{3, 4,\ldots, n \}$ has zero value for the two items and play truthfully. This completes our proof. \end{proof}

If one looks into the proof of this negative result, they would see that it is quite brittle, depending on a specific allocation order for the items and on having value ties.
So, it makes sense to look for complementing positive results next. Indeed, we  show that EF1 can be approximately guaranteed for agents who play best responses in instances with three agents. As we have already mentioned, the presence of a third agent makes the analysis of the behavior of the agents (and the behavior itself) much more complex. Therefore, we are able to prove the next statement only for Preferred-Item-E-C-E, that makes use of the additional part of the input.

In particular, as one may see in the proof of the next theorem, the priority to sources that have outgoing edges, the ability for an agent to choose the next item that she will receive, and the fact that the choice of the item is not affected by the declaration of the values for the items, are all essential for the theorem to hold.

Before we proceed, we would like to point out again, that the proof of Theorem \ref{thm:3EF1} requires heavy case analysis on how the agents behave at a best response. The fact that the proof becomes extremely intricate even for instances that involve only 3 agents comes as no surprise given the sophisticated structure and nature of E-C-E. Since the addition of even one more agent increases immensely the complexity of our head-on approach, a different technique is probably needed if one wants to explore what the best response fairness guarantees are in instances with four or more agents.

\begin{restatable}{theorem}{EFoThree}\label{thm:3EF1}
    For three agents with additive valuation functions, Preferred-Item-E-C-E always produces an allocation that is $\frac{1}{2}$-EF1 from the perspective of any agent who  plays a best response.
\end{restatable}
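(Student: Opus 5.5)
We argue by contradiction, following the template of Theorem~\ref{thm:2EF1}. Fix a profile $\bids=(\bids_i,\bids_{-i})$ in which $\bids_i$ is a best response of agent $i$, and let $A$ be the output of Preferred-Item-E-C-E. Suppose some agent $j\neq i$ satisfies $v_i(A_i)<\frac12 v_i(A_j-g^*)$, where $g^*$ is $i$'s most valuable item in $A_j$. We will construct a deviation $\bids^*_i$ giving $i$ value strictly greater than $v_i(A_i)$.

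The deviation uses the two levers that Preferred-Item-E-C-E offers, and which are independent of each other. First, $\textrm{pref}_i$ fixes which item $i$ takes whenever she is selected, regardless of her declared values. Second, the declared values only decide when $i$ has outgoing edges. The key tool is a \emph{locking} bid: give one item $k$ a declared value exceeding the total of all other items. Once $i$ holds $k$, she never envies anyone. Before that, a declaration that values everything she lacks highly makes her envious, and hence a prioritized source whenever nobody envies her.

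As in Theorem~\ref{thm:2EF1}, the first step is to consider the first time $t$ at which some agent (in particular $j$, along the bundle trajectory ending in $A_j$) holds a bundle $B$ with $v_i(B)>v_i(A_i)$. If $i$ ever held such a bundle before, she could lock it by inflating the item that created it, which contradicts the best response. So $B$ is formed by the addition of a fresh item $k$ to another agent's bundle.

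The main new difficulty with three agents is the third agent $\ell$. Unlike the two-agent case, locking on $k$ does not force $i$ to eventually obtain $B$. Envy cycles can pass through $\ell$, so $B$ may rotate to $\ell$ rather than to $i$. Also, $i$ being envious no longer guarantees she is the next source, since she may be envied, or another envious source may precede her in the fixed order. I would handle this with a case analysis on the envy-graph structure among $\{i,j,\ell\}$ after time $t$. The cases are: (a) a cycle containing $i$ forms, so $i$ receives $B$ or a bundle that some agent preferred over her own; (b) $i$ is a source with an outgoing edge and picks items via $\textrm{pref}_i$ ordered by true value; (c) $i$ is blocked because someone envies her.

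In case (c), the agent $\ell$ envying $i$ must itself receive items or be cycled out. Here I would use the structural fact that bundles are never split and only move as wholes. Together with the true-value ordering of $\textrm{pref}_i$, this lets agent $i$ imitate a round-robin-like pick against the other two.

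The factor $\frac12$ enters when the deviation only guarantees $i$ one of two bundles. Concretely, I would aim to show that under $\bids^*_i$ agent $i$ ends with value at least $\min\{v_i(B), v_i(A_j-g^*)-v_i(A_i)\}$ or similar. To bound the second term, I would split $A_j-g^*$ into the part $B$ accumulated up to $t$ and the part added afterwards. After $t$, agent $i$ grabs her truly best remaining item each time $j$ gets one (EF1-type alternation), so she gets at least half of the value of the post-$t$ items.

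The hardest step is bounding the damage from cycle eliminations through $\ell$ that could strip $i$ of what she accumulated. My first attempt would be to keep $i$ declaring all non-held items as valuable until she locks. Any cycle through $i$ then only gives her a bundle that, by the real values of $\ell$ and $j$, was acceptable to them. I would argue via $i$'s true valuation, comparing the bundle received in each rotation with her current one, and accept a factor-2 loss from the $B$ versus post-$t$ split.
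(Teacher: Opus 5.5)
Your framework matches the paper: the contradiction setup, the first time $t$ at which some bundle $B$ worth more than $v_i(A_i)$ to $i$ appears, the argument that $i$ never held such a bundle before $t$, and the locking bid. Your target bound $\min\{v_i(B), v_i(A_j-g^*)-v_i(A_i)\}$ would also suffice, since both terms exceed $v_i(A_i)$. The gap is that nothing in the plan establishes this bound, and the step meant to do so fails.

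The claim that after $t$ agent $i$ takes her best remaining item ``each time $j$ gets one'' is not a property of Preferred-Item-E-C-E with three agents. Agent $i$ is a source only when neither $j$ nor $\ell$ envies her under their fixed declared bids, so they can block her for many consecutive rounds. Even as an envious source, she may be preceded by another envious source. More fundamentally, while three bundles keep growing after $t$, nothing entitles $i$ to half of the post-$t$ value.

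Your last paragraph does not repair this. Cycles are triggered by the \emph{declared} bids of $j$ and $\ell$, who need not be best-responding, so their true values are irrelevant. Likewise, ``acceptable to them'' says nothing about $i$'s true value for what she receives. Having $i$ declare every non-held item valuable is exactly what causes harmful rotations. For example, an early cycle with the third agent can trade $i$'s accumulated items for a bundle she truly values less.

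The missing idea is that the lock on the item $k$ completing $B$ matters \emph{before} $i$ obtains $k$. It gives $i$ a permanent edge toward whoever holds $B$, so that agent is never a source, and $B$ never grows again unless $i$ receives it. Hence after $t$ all items flow into just two bundles: $i$'s own and the third bundle $D$.

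The factor $\frac12$ then comes from a counting claim you do not state. Let $a$ be $i$'s value at $t$. The unallocated value at $t$ exceeds $2v_i(A_i)-a$. Otherwise at most $v_i(A_i)$ of it reaches $A_j$, and $v_i(A_j-g^*)\le v_i(B-g')+v_i(A_i)\le 2v_i(A_i)$, where $g'$ is $i$'s best item in $B$. So one of the two growing bundles must pass $v_i(A_i)$.

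The paper locates the first such crossing using an auxiliary bid that is $0$ on all post-$t$ items, so no envy or cycle between $i$ and the holder of $D$ can arise. It then changes only the declared value of the crossing item:
\begin{itemize}
\item If $i$'s own bundle crosses, she locks it.
\item If $D$'s holder crosses, $i$ starts envying that bundle. She then either obtains it through a cycle, after which, having priority, she uses $\textrm{pref}_i$ to take the item that bundle would have received next. Or she is the only source and takes all remaining items.
\end{itemize}

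Around this core the paper also handles four further cases:
\begin{itemize}
\item $B$'s holder already envies $i$, giving an immediate swap.
\item The third agent also envies $B$'s holder and has priority; both runs coincide until that edge disappears.
\item A single post-$t$ item is worth more than $v_i(A_i)$; $i$ grabs it early via $\textrm{pref}_i$ after creating edges toward the first recipients.
\item $v_i(A_i)=0$.
\end{itemize}
Your case split (a)--(c) contains none of these ingredients.
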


\begin{proof}

    We assume for contradiction that this is not the case, i.e., there is an agent $i \in \{1,2,3\}$ that plays a best response $\bids_i$ to what the other agents played, say that this strategy profile is $\mathbf{b}$, and the produced allocation $A=(A_1, A_2, A_3)$ is not $\frac{1}{2}$-EF1 for her. The latter implies that there is at least one agent $j$ for which we have $v_i(A_i)<{v_i(A_j - g)}/{2}$, where $g$ is the item with the highest value in $A_j$, according to agent $i$. Our ultimate goal will be to design a new bidding strategy for agent $i$ that gives her strictly better utility than $v_i(A_i)$. 
    
    The basic idea of our approach is that agent $i$ will play exactly as in $\bids_i$ up to the point where for the first time, an agent $k$ gets a bundle $B$, the value of which exceeds $v_i(A_i)$, i.e., $v_i(A_i)<v_i(B)$. 
    Say that this point in time is $t$, and observe that $t$ must exist as otherwise $v_i(A_i)\geq{v_i(A_j - g)}/{2}$ (notice that at point $t$ agent $k$ might have bundle $B$ that will eventually form bundle $A_j$, and $k$ might be different than $j$ who ends up with bundle $A_j$, as this bundle might go to agent $j$ through a cycle elimination after point $t$). Before we proceed, we want to point out that, as in the two agents' case, when agent $i$ plays $\bids_i$, she cannot get a value that is strictly better than $v_i(A_i)$ before point $t$:
    while this is in principle possible to happen, as agent $i$ might get such a value and then lose it because of a cycle elimination, it would contradict the fact that $\bids_i$ is a best response.

To see this, by repeating the same arguments as in the two agents' case, we assume that the above situation happened, and thus we have that bundle $B$ must have been formed at a point in time before $t$, when an item $l$ (which at that point was available and not allocated to any of the agents) was added to the then current bundle of agent $k$. The only other way would be for agent $i$ to get bundle $B$ from one of the other agents via an envy cycle elimination, which however would contradict the definition of $t$. Therefore, once holding this more valuable bundle, agent $i$ could declare the same values and item orderings as in $\bids_i$, but change the declaration for item $l$, and declare a value that is more than the sum of the values of all the other items. This would guarantee that bundle $B$ will be formed as before, and from that point on-wards, there will be no edges from agent $i$ towards any of the other two agents until the end of the run of the algorithm, guaranteeing that agent $i$ ends up with bundle $B$, a contradiction. This observation is crucial as it implies that at point $t$, agent $i$ can, if desired, put an edge towards agent $k$, 
as possessing a bundle that for the first time exceeds $v_i(A_i)$ is a result of agent $k$ getting an item at point $t$, and not receiving this bundle from another agent through a cycle elimination.


Assume for the moment that  $v_i(A_i)>0$, and let $a\geq 0$ to be the value that agent $i$ has at 
point $t$. At this point we can claim that agent $i$'s value for the remaining unassigned items is strictly higher than $2\cdot v_i(A_i)-a$. 
As we argue below, the reason for this is that if the remaining value was at most $2\cdot v_i(A_i)-a$, given that by playing $\bids_i$ agent $i$  gets exactly  $v_i(A_i)-a$ of this, even if all of the remaining value (at most $v_i(A_i)$) could go to agent $j$, $\frac{1}{2}$-EF1 is not violated from agent $i$'s perspective. 
To see this, notice that in the case where agent $i$ plays $\bids_i$, 
agent $j$ would get at the end at most a total value of $v_i(B)+v_i(A_i)=v_i(A_j)$ from $i$'s perspective. At the same time, if $g'$ is the highest valued item according to agent $i$ in $B$,  we would have that $v_i(A_i)\geq v_i(B - g') \Rightarrow 2\cdot v_i(A_i)\geq v_i(B - g')+v_i(A_i) \geq v_i(A_j  - g)$, something that contradicts our original assumption. Therefore, we can conclude that the remaining value after point $t$ is strictly higher than $2\cdot v_i(A_i)-a$.

    In our proof, we are looking at the allocation from the perspective of agent $i$, and we consider two cases, depending on whether after point $t$ there is a single item that has more value than $v_i(A_i)$ or not. 

\vspace{3pt}

    \noindent\underline{\emph{Case 1:}} There is an item after point $t$, say item $g^*$, that has for agent $i$ value strictly higher than $v_i(A_i)$. We only need to show that there is a strategy for agent $i$ which she can follow and be able to get item $g^*$, or strictly more value than $v_i(A_i)$. Recall here that we consider the Preferred-Item-E-C-E version of the algorithm, meaning that an agent, when her turn comes, chooses the item that she will get. Therefore, if agent $i=1$, then deriving a contradiction is trivial, as she is the first during the run of E-C-E that gets any item and she will be able to choose $g^*$ since it is necessarily available at the very start. After that point, she will never put an edge to any of the other agents (by declaring that this item has more value for her than the sum of all the other items, and keeping the rest of her declaration the same) until there are no more items. This strategy guarantees her a strictly higher utility. If agent $i=2$ or $i=3$, then for $i$ not being $\frac{1}{2}$-EF1 with agent $j$, there must be at least 2 items in $A_j$ that have strictly positive value for $i$ (otherwise any allocation is always EF1 for her). We will start following what happens under strategy profile $\mathbf{b}$, so that we can understand which part of her strategy agent $i$ should change. 
    
    Initially, agent 1 gets her first item and it is easy to see that this item is not $g^*$ as even if this point in time is $t$, item $g^*$ is available after point $t$ in strategy profile $\mathbf{b}$. Now there are two cases to examine. If the agent besides agent $i$ and agent 1 did not put an edge towards agent 1 (for getting this first item), then agent $i$, by declaring that the item that agent 1 chose has positive value and that item $g^*$ has more value than the rest of the items together, while at the same time putting item $g^*$ as the first in her ordering (and keeping the rest of the declaration the same), she puts an edge towards agent 1 and she will be the next that receives an item. In that case she gets item $g^*$, and will have no outgoing edges until the end of the algorithm run. 
    This strategy guarantees her a higher utility than playing $\bids_i$, a contradiction. 
    
    Now assume that in bidding profile $\bids$ the agent besides agent $i$ and agent 1 did put an edge towards agent 1 for getting this first item. If agent $i$ has an edge towards agent 1 in $\bids_i$ or deviates to strategy that we described before and puts an edge towards agent 1, and Preferred-Item-E-C-E chooses her as the recipient of the next item (because of the priority rule), then we handle this case in similar manner as before. If on the other hand Preferred-E-C-E gives the next item to the agent besides agent $i$ and agent 1 (again due to the priority rule), then there are two cases to examine depending on if this item is $g^*$ or not. If this item is not $g^*$, then agent $i$ can put an edge towards agent 1 when she gets the first item (by declaring a positive value for it), and then an edge towards the other agent when she gets the second item (again by declaring a positive value for it). Then she is the only one that can get an item (as there are no ongoing edges towards her), she chooses $g^*$, by putting it first in the declared ordering, and removes any outgoing edges until the end of the game, by declaring that $g^*$ has more value than all the other items together (and keeping the rest of the declaration the same). This guarantees her a higher utility than before. 
    
    If on the other hand the first item claimed by this third agent besides $i$ and $1$ is $g^*$, then this point in time is not $t$, as by definition, item $g^*$ is available after point $t$. This observation implies that point $t$ was when agent 1 got the first item. Now, both the first two agents have one item each, and each item has a value that is more than $v_i(A_i)$. As these two items belong in different bundles and will never end up at the same bundle, along with the fact that agent $i$ in the end will not be $\frac{1}{2}$-EF1 with at least one of the agents, we have that the value that remains after the first two items were allocated is at least $2\cdot v_i(A_i)$. To see this, note that agent $i$ has gotten nothing currently, and by playing $\bids_i$ has to get a value of $v_i(A_i)$, while at the same time the other two agents have just one item each and at least one of them has to get value strictly more than $2\cdot v_i(A_i)$. Therefore, agent $i$, by putting edges towards the other two agents (by declaring that each one of the first two items that were given have positive value, 
    and the value of the remaining items is 0, while keeping the rest of the declaration the same), will either get all of the remaining items, or a bundle of items that contains one of the first two items. To see this, notice that if agent $i$ does not get the set of all the remaining items, this means that a cycle elimination occurred at some point, and she got one of the first two items. After that point, agent $i$ will make sure to never have outgoing edges to any bundle that does not contain one of the first two items, therefore she will end up with a bundle that contains one of the two. This guarantees her a strictly higher utility, yet again a contradiction. 

    Recall that initially we assumed that $v_i(A_i)>0$. As the next case, we consider $v_i(A_i)=0$. The initial assumption about the fairness guarantee, implies 
    that there are at least two items with positive value for agent $i$. As before, if the agent besides agent $1$ and $i$ does not put an edge towards agent 1 after the assignment of the first item, 
    then 
    it is easy to see similarly as before that agent $i$ always can get one of the items that have a positive value, a contradiction. The same goes for the case where the agent besides agent $1$ and $i$, does put an edge towards agent 1 after the assignment of the first item, but after the allocation of the first two items in $\textbf{b}$, one of the items that have positive value for agent $i$ is still available. 
    
    If on the other hand, the agent besides agent 1 and agent $i$ puts an edge towards agent $1$ after she receives the first item in $\textbf{b}$, 
    and takes the second, remaining item with positive value for agent $i$ (notice that in that case each of the agents get one of these two items), this implies that there is at least one more item with positive value for agent $i$ (as currently in $\textbf{b}$ she is EF1 with both other agents, and the first two items belong in different bundles). Therefore, agent $i$ can get the third positively valued item by putting edges towards the other two agents, declaring that the first two items have positive value for her, and putting first in her ordering the third item with positive value (and keeping the rest of the allocation the same). After that, no matter the final allocation, she ends up with positive value, a contradiction.


\vspace{3pt}

    \noindent\underline{\emph{Case 2:}} Every item available after point $t$ has at most value $v_i(A_i)$ for agent $i$. Once again, let us consider point $t$ which is the first time that some agent $k \neq i$ gets a value that is strictly higher than $v_i(A_i)$, and let this come from bundle $B$, which eventually will form set $A_j$. Because of the discussion in the first paragraph, we know that the reason this happened is that an item was added to the current bundle of agent $k$, forming $B$. Thus, agent $i$ at that point can put an edge towards agent $k$ (by declaring that this item has more value than all the other items together). Notice that this might be or might not be the case in the original strategy $\bids_i$. 
    
    So now we will define a new strategy $\bids'_i$   
    such that at point $t$, agent $i$ will put an edge towards agent $k$ (specifically towards the bundle of agent $k$ as in the process, or even at point $t$, agent $k$ might lose this bundle due to a cycle elimination). Now, at this point the graph might have one of several states. We consider these states after the resolving of any cycle or cycles might have occurred as a result of agent $k$ receiving the item forming bundle $B$. Recall that cycles are resolved immediately after their creation, and if there are multiple cycles, then they are resolved lexicographically:
\medskip

    \noindent\underline{\emph{Case 2.A:}} If the agent that holds bundle $B$ at point $t$ still has an edge towards agent $i$, then $\bids'_i$ is defined exactly as $\bids_i$ with the only difference being that the item that was given at point $t$ has more value than all the remaining items together for agent $i$. Therefore, a cycle is created between $i$ and this agent, the elimination of which will give to agent $i$ bundle $B$ and thus a strictly higher value than $v_i(A_i)$. After that there are no outgoing edges from agent $i$ until the end of the algorithm, and thus she achieves a better value than before, a contradiction.
\medskip

    \noindent\underline{\emph{Case 2.B:}} Assume the agent that at point $t$ holds bundle $B$ does not have an edge towards agent $i$. Then, a component of $\bids'_i$ will be that the value of the last item that formed bundle $B$ is more than the value of all the other items together. This will guarantee, as before, that either there always is an edge from $i$ towards bundle $B$ after point $t$ (and no more items will be added to this bundle), or that agent $i$ will get bundle $B$ in case the holder of bundle $B$ at a point $t^c>t$ puts an edge towards agent $i$, or in case a cycle between all of the agents (that is created because of this declaration) at some point appear.\footnote{Notice that even if a cycle between them is formed, it is not guaranteed that agent $i$ will get bundle $B$. The reason for this, is that at the same time there might be a cycle between the other two agents as well, that will be resolved first, and then the new holder of bundle $B$ might not have an edge towards agent $i$.} In the latter case, agent $i$ will never have any outgoing edges until the end of the algorithm run, and thus will end up with a superset of bundle $B$ in the end. Notably, regardless of the rest of $i$'s report $\bids'_i$, agent $i$ might never end up with bundle $B$ after point $t$. To show an improved value (and therefore, contradiction to $\bids_i$'s best-response property) for agent $i$ despite this, we need to consider all leftover, possible states of the graph after resolving cycles at point $t$, and define the rest of $\bids'_i$ accordingly.

\medskip

    \noindent\underline{\emph{Case 2.B.i\,:}} Agent $i$ is the only agent that has an edge towards 
         bundle $B$ at point $t$, or $i$ has higher priority in Preferred-Item-E-C-E than the agent besides her and the current holder of $B$. Say that this agent, is agent $i'$, and let the bundle that she currently holds to be $D$.
         After point $t$ there are three possible scenarios:
         \begin{enumerate}[itemsep=2pt, topsep=3pt, leftmargin=15pt]
             \item Agent $i'$ gets the next item (in case she has an edge towards agent $i$),
             \item Agent $i$ has no edge towards agent $i'$, and she is the one that gets the next item,
             \item Agent $i$ has an edge towards agent $i'$, and she is the one that gets the next item,
         \end{enumerate}

          In scenarios 1 and 2, initially consider bid $\bids^*_i$ that will be used in order to define a point in time $t^*>t$ that will be need. Now, let this bid be the same as $\bids_i$ in terms of how the items are ordered, while in terms of values, it is the same as $\bids_i$ up to point $t$, and  after point $t$, it has a value of 0 for every item. Under this declaration, and keeping the declarations of the other agents fixed, from point $t$ on-wards, items will be added to the bundle of agent $i$ and bundle $D$. Notice that bundle $D$ might end up to the current holder of bundle $B$, because of a cycle between the two agents, however this can happen only once after point $t$, as no items are ever added to bundle $B$ because of agent $i$. Since the value for the remaining items at point $t$ is strictly higher than $2 \cdot v_i(A_i)-a$, there will be point $t^*$ after which for the first time, either agent $i$ or the current holder of the superset of bundle $D$ (depending on the state of the graph), chooses an item, and this makes the bundle of this agent to exceed value $v_i(A_i)$, according to the true values of agent $i$. Note that up to that point, there is no cycle between agent $i$ and the current holder of the superset of bundle $D$, as at point $t$, there was no edge from agent $i$ towards $D$, and the items that are added to this bundle after that point do not have any value for agent $i$ according to $\bids^*_i$. 
         
         So, if agent $i$ is the one that after point $t^*$ gets a value that is more than $v_i(A_i)$, then instead, she can she play $\bids'_i$, which differs from $\bids^*_i$ only in declaring that the item that agent $i$ gets after $t^*$ has a positive value. This will guarantee that she will never have an edge towards the bundle of agent $i'$ again, and thus, she will keep a value that is higher than $v_i(A_i)$, a contradiction. 

          If on the other hand, it is the current holder of the superset of bundle $D$, that after point $t^*$ gets a value that is more than $v_i(A_i)$, then, there is a point in time $t' \leq  t^*$ when the last item that forms this superset of $D$ was added. Notice that when this happened, it could not be possible for agent $i$ to be the then owner this set, as she has no edges towards bundle $D$ or any of its supersets up to point $t^*$, and therefore she could not get this set through a cycle elimination. This means that when the last item is added to the superset of $D$ at point $t'$, someone besides agent $i$ holds this set. 
          Therefore, agent $i$ can play $\bids'_i$, which is exactly the same as $\bids^*_i$ with the difference that the item that is added to the superset of $D$ at point $t'$, has a positive value sufficiently high to make her envious towards the agent that holds this set. If a cycle between them is created at that point or afterwards (the first time that this happens), then agent $i$ gets the superset of $D$, and modifies the ordering of the items in $\bids'_i$, so that the next item that she will receive (as she has priority after a cycle elimination due to the edge towards the holder of bundle $B$) is the item that the holder of the superset of $D$ would receive under $\bids^*_i$ at point $t^*$. If no cycle is created between them at any point after $t'$, then agent $i$ receives all the remaining items. In both cases she will receive a value that is higher than $v_i(A_i)$, a contradiction.

          Finally, we have scenario 3, where the only difference is that at point $t$, agent $i$ has an edge towards agent $i'$, and thus she is the one that receives the next item. Consider declaring in $\bids'_i$ that the value of this newly received item makes the values of her bundle and the bundle of agent $i'$ equal: then, this edge is removed immediately, and no cycle is present at this point. So the third scenario reduces to the previous two, and we can reach a contradiction in exactly the same manner.
          
\medskip

    \noindent\underline{\emph{Case 2.B.ii\,:}} The last case is when at point $t$ both agents $i$ and $i'$ have an edge towards the current holder of bundle $B$, but now agent $i'$ has priority. The problem with this case is that when a cycle is eliminated, the next item will go to agent $i'$ instead of agent $i$, when both of them are sources and have outgoing edges. However, if we consider what happened when strategy $\bids_i$ was played, in that case agent $i'$ also at point $t$ put an edge towards the holder of bundle $B$. The reason for this is that the two instances are the same up to that point. Now notice, that even if agent $i$ puts an edge towards the holder of bundle $B$ at point $t$ by playing $\bids'_i$ which was not there in $\bids_i$, this will not affect what happens after, at least up to the point when agent $i'$ removes her edges towards bundle
        $B$ (notice that this can happen because she does not envy the agent holding $B$ anymore, or because a cycle between them was resolved), as agent $i'$ is the one that in both instances gets the next items (as she has priority), and in both instances the edge to the current holder of bundle $B$ will be removed at the same point in time. 
        
        With the above discussion, we can derive that after this point, regardless of whether agent $i$ plays $\bids_i$ or $\bids'_i$, there is still $2 \cdot v_i(A_i)-a$ value available as no item has been added to $B$ in either bidding profile. At the same time, the agent besides $i$, that does not hold bundle $B$, has no outgoing edges towards the agent that currently holds bundle $B$, 
         and thus this case can be reduced to case \emph{2.B.i}.
        To see this, note that agent $i$ is the only one with an outgoing edge towards the current holder of bundle $B$, and thus always has priority as long as the edge towards the current holder of $B$ is never removed.

This completes our proof.\end{proof}

\subsection{Going Beyond Additivity}\label{sec:BR3S}


In this final section, we go beyond the case of agents with additive valuations functions, and we consider monotone subadditive agents. As the following theorem demonstrates, we can generalize our positive results for this superclass of additive functions, with some loss in the approximation factor of the fairness guarantee. 
In this setting, the bid vector of an agent is  $\bm{b}_i = (f_{i}; \textrm{pref}_i)$, where $f_{i}$ is a value oracle for the subadditive function agent $i$ wants to declare. 


\begin{restatable}{theorem}{EFoSubadditive}\label{thm:EF1sub}
    For three (resp.~two) agents with monotone subadditive valuation functions, Preferred-Item-E-C-E always produces an allocation that is $\frac{1}{3}$-EF1 (resp.~$\frac{1}{2}$-EF1) from the perspective of any agent who plays a best response. 
\end{restatable}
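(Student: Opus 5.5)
The plan is to rerun the proofs of Theorem~\ref{thm:2EF1} (two agents) and Theorem~\ref{thm:3EF1} (three agents) almost step by step. Each place where additivity was used gets replaced by subadditivity plus monotonicity, and the lost slack is absorbed into the weaker factors $\frac{1}{2}$ and $\frac{1}{3}$.

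\textbf{Deviation gadgets.} First I would check that the strategic gadgets survive when the report is a value oracle $f_i$ rather than a vector. Whenever the earlier proofs declare ``item $l$ has value larger than all other items together'', I would use instead the oracle $f'_i(S)=f_i(S)+H\cdot\mathbf{1}[l\in S]$ for a huge $H$. This oracle is still monotone subadditive. It agrees with $f_i$ on every comparison that does not involve $l$, and it guarantees that the holder of $l$ is never envied away from and never envies anyone. The same trick handles the opposite declarations. To make all items after time $t$ worthless, use $f_i(S\cap P_t)$, where $P_t$ is the set of items allocated by $t$. To make one item carry just enough value to create or remove a single edge, add a suitable constant times $\mathbf{1}[g\in S]$. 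Since the preference list $\textrm{pref}_i$ is unchanged by these modifications, the item-selection arguments of Preferred-Item-E-C-E carry over verbatim. With these gadgets, the first half of the earlier proofs goes through unchanged:
\begin{itemize}
\item the definition of the critical time $t$;
\item the fact that a best-responding agent never held a bundle worth more than $v_i(A_i)$ before $t$;
\item the fact that at $t$ the bundle $B$ is formed by adding a fresh item.
\end{itemize}

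\textbf{Quantitative step.} Here additivity really entered. In the three-agent proof we argued that if the value left after $t$ were at most $2v_i(A_i)-a$, then $v_i(A_j-g)\le v_i(B-g')+v_i(A_i)\le 2v_i(A_i)$. With subadditive $v_i$, I would instead write $A_j\subseteq B\cup R_j$, where $R_j$ is the set of items added after $t$. Subadditivity and monotonicity then give $v_i(A_j-g)\le v_i(B-g)+v_i(R_j)$.

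The key substitute lemma is that $v_i(R_j)$ can be charged to $v_i(A_i)$. The plan is to show, via the ``worthless after $t$'' gadget, that for each of the (at most two) other bundles receiving items after $t$ we can locate a time $t^*$ at which that bundle first exceeds $v_i(A_i)$ in $i$'s true value. Otherwise the deviation arguments of Cases~1 and~2.B.i give $i$ a profitable deviation. Each of these pieces, together with $v_i(B-g)\le v_i(A_i)$, is thus bounded by $v_i(A_i)$. With three agents this yields $v_i(A_j-g)\le 3v_i(A_i)$, hence $\frac{1}{3}$-EF1. With two agents the only bundle growing besides $i$'s own is $A_j$ itself, so we get $\frac{1}{2}$-EF1. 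In particular, for two agents I would not reuse Theorem~\ref{thm:2EF1}'s exact EF1 argument. That argument relies on $v_i(B)=v_i(A_{i'})$ via zero-valued additions, which fails without additivity. Instead I would run the same $t$-argument and pay the factor $2$.

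\textbf{Main obstacle.} I expect the hardest step to be Case~1, where a single late item $g^*$ has value exceeding $v_i(A_i)$. Together with the bookkeeping of ``remaining value'', it relied on $v_i(M\setminus P_t)$ splitting additively between bundles. In the subadditive setting, the remaining value can be spread so that no single bundle looks large even though a union is. My first attempt is to replace every ``remaining value exceeds $2v_i(A_i)-a$'' statement by a direct statement about the final bundle $R_j$: either $v_i(R_j)>v_i(A_i)$, so some prefix of items added to it first crosses $v_i(A_i)$ and yields a deviation target, or it is at most $v_i(A_i)$ and is absorbed into the bound. This avoids ever adding values across different bundles.
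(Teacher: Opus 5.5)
Your overall architecture matches the paper's, and two pieces are fine: the oracle gadgets (sums of $f_i(S\cap P_t)$ and indicator terms $c\cdot\mathbf{1}[g\in S]$ are monotone subadditive, exactly like the paper's $f'(S)=f(S\cap W)+c\cdot\mathbf{1}[k\in S]$), and your inequality $v_i(A_j-g)\le v_i(B-g)+v_i(A_j\setminus B)$.

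The gap is in how you then bound $v_i(R_j)$, where $R_j=A_j\setminus B$. The deviations of Case~2.B.i never act on $R_j$. Once $i$ permanently envies the holder of $B$, nothing more is added to $B$. Instead, \emph{all} items still unallocated at time $t$ (call this set $C$) are redistributed between $i$'s bundle and the bundle $D$ and its supersets. So these deviations need a lower bound on $v_i(C)$. They succeed because, if $C$ is split into $C_1$ (to $i$) and $C_2$ (to the $D$-bundle), subadditivity gives $v_i(C_1)+v_i(C_2)\ge v_i(C)$. Hence one of the two bundles crosses $v_i(A_i)$ as soon as $v_i(C)>2v_i(A_i)$. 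This is precisely the addition across different bundles that your last paragraph proposes to avoid.

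Your substitute dichotomy does not replace it. A prefix of $R_j$ ``first crossing $v_i(A_i)$'' is not an event of the run: those items are appended to $B$, which already exceeded $v_i(A_i)$ at time $t$. Moreover, in Case~2.B the whole difficulty is that $i$ may be unable to capture $B$. In the deviated run, the items of $R_j$ are scattered over two bundles. As a sanity check, if the dichotomy were valid it would give $v_i(R_j)\le v_i(A_i)$, hence $\frac12$-EF1 for three subadditive agents. That is stronger than the factor $3$ your own accounting produces. Your ``(at most two) pieces'' sentence also has two problems. It never says in which run the pieces live. And its logic is inverted: locating $t^*$ is what \emph{yields} the deviation in Case~2.B.i, while the absence of any crossing is what would bound the pieces.

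The missing step is the paper's lemma on the whole leftover set. Take $g$ to be the last item added to $B$, so that $v_i(B-g)\le v_i(A_i)$, and note that $A_j\setminus B\subseteq C$. A violation of $\frac13$-EF1 then gives $v_i(C)\ge v_i(A_j\setminus B)\ge v_i(A_j-g)-v_i(B-g)>2v_i(A_i)$. With two agents, a violation of $\frac12$-EF1 gives $v_i(C)>v_i(A_i)$ in the same way. For two agents this finishes the proof at once, because the deviation of Theorem~\ref{thm:2EF1} leaves $i$ with a superset of $B$ or with all of $C$. For three agents, this bound replaces ``remaining value exceeds $2v_i(A_i)-a$'' throughout Case~2.B.i, each time splitting $C$ subadditively between the two bundles that receive items in the deviated run.

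Finally, Case~1 is not where the difficulty lies. Its only quantitative use is the sub-case where agent $1$ and the third agent each hold one item worth more than $v_i(A_i)$. That sub-case needs only monotonicity: $A_j$ minus whichever of those two items it contains lies inside the leftover set.
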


\begin{proof}
Once again we consider an agent $i$ that plays her best response $\bids_i$ to what the other agents play in $\textbf{b}$, resulting in allocation $A=(A_1, A_2)$ (or $A=(A_1, A_2, A_3)$ depending on the number of agents), and we assume that she does not achieve the respective approximation of EF1 towards another agent $j$ according to her perspective. We then consider the point in time $t$, which is the first time that a bundle $B \subseteq A_j$ is formed, for which $v_i(B)>v_i(A_i)$. Note that $B$ must form while in the possession of an agent distinct from $i$. 

In case there are two agents, after point $t$, the set of the remaining items, say $C$, has strictly more value than $v_i(A_i)$. To see this, assume for contradiction that $v_i(A_i)\geq v_i(C)$, and recall that from our assumption we have that  $v_i(A_j - g^*)>2\cdot v_i(A_i)$, where $g^* \in A_j$ is the item that decreases the most the value of set $A_j$ after its removal. At the same time, we have that $v_i(B - g)\leq v_i(A_i)$, where $g \in B \subseteq A_j$ is the item that decreases the most the value of set  $B$ after its removal. We have the following: $v_i(A_j - g^*)\leq v_i(A_j - g)\leq v_i(B - g)+v_i(A_j\setminus B)\leq v_i(B- g)+v_i(C) \leq v_i(A_i)+ v_i(A_i)=2\cdot v_i(A_i)$, where the second inequality holds due to subadditivity, and the third inequality holds because $A_j \setminus B \subseteq C$. A contradiction.

In case there are three agents, after point $t$, the set of the remaining items, say $C$, has strictly more value than $2\cdot v_i(A_i)$. To see this, assume for contradiction that $2\cdot v_i(A_i)\geq   v_i(C)$, and recall that by assumption it holds that  $v_i(A_j - g^*)>3\cdot v_i(A_i)$, where $g^* \in A_j$ is the item that decreases the most the value of set $A_j$ when being removed. At the same time, we have that $v_i(B - g)\leq v_i(A_i)$, where $g \in B \subseteq A_j$ is the item that decreases the most the value of set  $B$ when removing it. We have the following: $v_i(A_j - g^*)\leq v_i(A_j - g)\leq v_i(B - g)+v_i(A_j\setminus B)\leq v_i(B - g)+v_i(C) \leq v_i(A_i)+2\cdot v_i(A_i)=3\cdot v_i(A_i)$, where the second inequality holds due to subadditivity, and the third inequality holds because $A_j \setminus B \subseteq C$. A contradiction.

The proof can be concluded by following exactly the same steps and considering exactly the same cases as in Theorems \ref{thm:2EF1} and \ref{thm:3EF1}. We use the observation that agent $i$ can declare analogous, adjusted valuation functions with the same ordering of the items $\bids'_i$ as in the proofs of these theorems, manipulating the outcome in a very similar manner. Specifically, to construct the $\bids_i'$ for the now more challenging class of monotone subadditive valuations, we will leave the valuation of any set of goods whose elements all arrive before time $t$ the same as in $\bids_i$. For any set containing also later arriving items, we will report its value to equal that of the subset of items before $t$ in it. Finally, for the sole item that we later raise the valuation above zero for in the additive case, we then adjust the value of any set containing it to be the value of the subset of items before $t$, plus the desired additive value of the item. 
The resulting function is again monotone subadditive, 
as we will show at the end of the proof. 
Together with the (still preserved in the adjusted $\bids_i'$) effects on the run of E-C-E as in the additive case, this shows that it is always possible for agent $i$ to derive a bundle of value more than $v_i(A_i)$: the same arguments as in the original (additive) analysis ensure that in both cases, there is enough value after point $t$ to make this possible for agent $i$, when she plays strategy $\bids'_i$ as defined above.

It remains to show that if $f$ is a nonnegative monotone subadditive function, $W$ is the set of items arriving before $t$, $k$ is the special item the value of which we boost to $c >0$, then the function 
$f'(S)=f(S\cap W)+c\cdot\mathbf{1}[k\in S]$, $S\subseteq M$, where  $\mathbf{1}[k\in S]$ is the indicator function of whether $k\in S$, also nonnegative monotone subadditive. Nonnegativity and monotonicity are straightforward. For subadditivity, let  $S, T\subseteq M$ and notice that $f((S\cup T)\cap W)\le f(S\cap W)+f(T\cap W)$ by the subadditivity of $f$, but also  $c\cdot\mathbf{1}[k\in S\cup T]\le c\cdot\mathbf{1}[k\in S]+c\cdot\mathbf{1}[k\in T]$ for the indicator term. By adding these together, we have $f'(S\cup T)\le f'(S)+f'(T)$.\end{proof}

\section{Discussion and Future Directions}\label{sec:dis}

As our results demonstrate, the sophisticated structure of E-C-E (if one compares it with Round-Robin) gives more freedom to the agents, and in the end prevents it from having pure Nash equilibria even for extremely simple scenarios. On the other hand, E-C-E can be implemented in versions that surprisingly retain fairness guarantees from the perspective of best-responding agents, at least when $n=2$ or $3$. This demonstrates for the first time the immense impact choosing a concrete version of E-C-E has on its properties beyond guaranteeing EF1 fairness, and poses the first results on preserving fairness for rational agents 
with monotone subadditive valuation functions. 

Preferred-Item-E-C-E seems to be the most promising version in terms of generalizing Theorems \ref{thm:3EF1} and \ref{thm:EF1sub}. Of course, one should go well beyond our techniques which are heavily based on the fact that there are only a few agents. Also note that our  results regarding the non-existence of pure Nash equilibria do not capture this variant.  It might be the case that Preferred-Item-E-C-E has pure Nash equilibria (at least for an interesting subset of instances) or, if not, even more structured versions of E-C-E seem to be needed. 

Another point worth mentioning is that we did not consider share based fairness notions here, like \textit{maximin share fairness}. For algorithms that satisfy such notions, the fact that best response strategies maintain the fairness guarantees in our setting is straightforward. Nevertheless, there are E-C-E-inspired algorithms that combine share-based and envy based-fairness guarantees (e.g., \cite{AkramiR25a}) and it would be interesting to explore to what extent these can be simultaneously maintained in the presence of incentives.

\section*{Acknowledgments}
Georgios Amanatidis was partially supported by the project MIS 5154714 of the National Recovery and Resilience Plan Greece $2.0$ funded by the European Union under the NextGenerationEU Program. Rebecca Reiffenh{\"a}user was partially supported by the NWO grant \emph{Fair Allocation for Strategic Agents (FAStA)}, grant ID https://doi.org/10.61686/AZZOJ38639.


\bibliographystyle{plain}
\bibliography{final}

\end{document}